\newtheorem{theorem}{Theorem}[section]
\newtheorem{lemma}[theorem]{Lemma}
\newtheorem{corollary}[theorem]{Corollary}
\newtheorem{proposition}[theorem]{Proposition}
\theoremstyle{definition}
\newtheorem{definition}[theorem]{Definition}
\newtheorem{example}[theorem]{Example}
\newtheorem{remark}[theorem]{Remark}
\newcommand{\R}{\mathbb{R}}
\newcommand{\Z}{\mathbb{Z}}
\newcommand{\Q}{\mathbb{Q}}
\newcommand{\N}{\mathbb{N}}
\newcommand{\CS}{\mathit{CS}}
\newcommand{\calCS}{{\mathcal{CS}}}
\newcommand{\tup}[1]{\langle #1\rangle }
\newcommand{\supp}{{\mathit{supp}}}
\renewcommand{\vec}[1]{\mathbf{#1}}
\renewcommand{\cal}[1]{\mathcal{#1}}
\newcommand{\optval}{\mbox{\textsc{OptVal}}}
\newcommand{\arbval}{\mbox{\textsc{ArbVal}}}
\newcommand{\indset}{\mbox{\textsc{Independent-Set}}}
\newcommand{\setcover}{\mbox{\textsc{Set-Cover}}}
\newcommand{\checkcore}{\mbox{\textsc{CheckCore}}}
\newcommand{\isstable}{\mbox{\textsc{Is-Stable}}}
\newcommand{\wmax}{W_{M}}
\newcommand{\eps}{\varepsilon}
\newcommand{\NP}{\mathit{NP}}
\newcommand{\poly}{\mathrm{poly}}
\newcommand{\width}{\mathrm{width}}
\newcommand{\tw}{\mathrm{tw}}
\newcommand{\eqdef}{\stackrel{\mathrm{def}}{=}}
\newcommand{\opt}{\mathrm{opt}}
\newcommand{\nodes}{\mathrm{N}}
\newcommand{\kibitz}[2]{\ifnum\Comments=1{\color{#1}{#2}}\fi}			
\newcommand{\yz}[1]{\kibitz{red}{[Yair:#1]}}							
\newcommand{\gc}[1]{\kibitz{purple}{[George:#1]}}						
\date{Received: date / Accepted: date}
\begin{document}
\begin{frontmatter}

\title{Cooperative Games with Overlapping Coalitions: Charting the Tractability Frontier}

\author[cmu]{Yair Zick}
\ead{yairzick@cmu.edu}
\author[tuc]{Georgios Chalkiadakis}
\ead{gehalk@ece.tuc.gr}
\author[oxf]{Edith Elkind}
\ead{elkind@cs.ox.ac.uk}
\author[aueb]{Evangelos Markakis}
\ead{markakis@gmail.com}
\address[cmu]{School of Computer Science, 
							Carnegie-Mellon University, USA}
\address[tuc]{School of Electronic and Computer Engineering, 
							Technical University of Crete, Greece}
\address[oxf]{Department of Computer Science, University of Oxford, UK}
\address[aueb]{Department of Informatics, 
							Athens University of Economics and Business, Greece}

\begin{abstract}
In many multiagent scenarios, agents distribute resources, such as time or energy, among several tasks.
Having completed their tasks and generated profits, task payoffs must be divided among the agents in some reasonable manner.
Cooperative games with overlapping coalitions (OCF games) are a recent framework proposed by \cite{ocfgeb}, generalizing classic cooperative games to the case where agents may belong to more than one coalition. Having formed overlapping coalitions and divided profits, some agents may feel dissatisfied with their share of the profits, and would like to deviate from the given outcome. However, deviation in OCF games is a complicated matter: agents may decide to withdraw only {\em some} of their weight from {\em some} of the coalitions they belong to; that is, even after deviation, it is possible that agents will still be involved in tasks with non-deviators. This means that the desirability of a deviation, and the stability of formed coalitions, 
is to a great extent determined by the reaction of non-deviators.
In this work, we explore algorithmic aspects of OCF games, focusing on the {\em core} in OCF games. We study the problem of deciding if the core of an OCF game is not empty, and whether a core payoff division can be found in polynomial time;
moreover, we identify conditions that ensure that the problem admits polynomial time algorithms. 
Finally, we introduce and study a natural class of OCF games, the {\em Linear Bottleneck Games}. Interestingly, we show that such games always have a non-empty core, even assuming a highly lenient reaction to deviations.

\end{abstract}

\begin{keyword}
Cooperative Games \sep Overlapping Coalition Formation \sep Core \sep Treewidth \sep Arbitration Functions
\end{keyword}
\end{frontmatter}
\section{Introduction}
Consider a simple market exchange between several vendors, each owning a single, divisible good. Vendors perform pairwise exchanges (say, oil for sugar) and may charge different prices to different buyers. Suppose that having agreed on a transaction schedule and payments, one vendor decides that he is unhappy with the amount of money he receives from the transaction with buyer 1, and wishes to cancel the deal. However, buyer 2, upon hearing that the transaction with buyer 1 has been canceled, no longer wishes to work with the seller, and cancels the agreement with him as well. 

This setting features several interesting characteristics. First, a vendor may allocate resources to several buyers, and may also buy from several vendors. In other words, vendors may allocate resources to several profit-generating tasks. Second, agents may withdraw some of their resources from some agreements. For example, an oil vendor may wish to sell less oil to some client, but not change his interactions with other parties; similarly, a vendor may want to pay less to some sellers, but maintain the same payments to others. Finally, when trying to strategically change an agreement, vendors must be aware of how their actions affect the contracts they still maintain with other (possibly unaffected) parties. 

In this setting, agents must collaborate (exchange goods) in order to generate revenue. Having generated revenue by making an exchange, agents are free to share the profits of the exchange as they see fit. Profit sharing can be done directly, if the exchange of goods results in profit, e.g., if the agents can produce a new good using their resources and sell it for a profit; it can also be indirect, e.g., via setting a price for the good sold. To conclude, agents must first work together to generate profits, and, subsequently, share profits among themselves in some reasonable manner. When sharing profits, agents should account for individuals or groups of agents who are underpaid. A group of agents who can get more money by deviating from the proposed agreement may destabilize the entire market, resulting in a cascade of deviations that may result in a less desirable state (not to mention the cost of actual deviations). However, what constitutes a profitable deviation greatly depends on how non-deviators respond to a deviating set. 

Modeling this system of incentives and reactions is a challenge in itself, which has only recently been addressed. In their paper, \cite{ocfgeb} propose a novel approach to modeling scenarios where agents can divide resources among several coalitions, called {\em overlapping coalition formation games} (OCF games). Following their work, \cite{zick2014jair} propose a general model for handling deviations in such settings. These two works offer a rather comprehensive conceptual model for handling deviation behavior and analyzing stability in settings where agents work on several concurrent projects, and may only partially deviate. 

While conceptual models of strategic interactions in OCF games have been developed to a rather satisfactory degree, there has been little work on computing solution concepts for such games. The theory of OCF games is a generalization of classic cooperative game theory~\citep{coopbook}, where computational issues are relatively well-understood (see~\cite{compcoopbook} for an overview). However, as \cite{ocfgeb} show, more elaborate reactions to deviation may make even tractable instances of computing solution concepts for classic cooperative games NP-hard. For example,~\cite{ocfgeb} consider a class of games called {\em threshold task games}, a class generalizing {\em weighted voting games}). It is shown that finding a payoff division in the core can be done in pseudo-polynomial time (i.e., in time polynomial in the number of agents and in the number of bits used to encode the largest weight of any agent), if one assumes that when a set of agents deviates, no other agent will want to work with it---a reaction termed {\em conservative}. However, if non-deviators agree to work with a deviating set in any coalition that is unhurt by the deviation, a reaction termed {\em refined}, the same problem becomes NP-hard. 

To conclude, when assessing the computational complexity of finding stable outcomes in OCF games, one needs to consider not only structural properties of the value function (i.e., the way that agents generate profits), but also the way agents react to deviations. In our work, we study some of the computational issues that arise from these two aspects of OCF games, when deciding stability related questions in OCF games. 
\subsection{Our Contribution}
In this work we study several closely related questions:
\begin{enumerate}
	\item Given an OCF game, find an optimal coalition structure---i.e., an optimal way for agents to divide into groups and generate profits.
	\item Given an OCF game, a coalition structure, a payoff division and a subset of agents, compute the most that these agents can get by deviating.
	\item Given an OCF game, a coalition structure and a payoff division, decide whether the given payoff division is stable, i.e, does there exist a subset of agents that can receive more by deviating, given the reaction non-deviators would have to the deviation?
	\item Given an OCF game, find a stable outcome if one exists---i.e., find a coalition structure and a payoff division that ensures that no subset of agents can deviate, again given the reaction of the rest of the agents to the deviation.
\end{enumerate}
Our paper makes two attempts to answer these questions. In the first part of this paper (sections~\ref{sec:optval} to \ref{sec:tw}), we make no assumptions on how agents generate revenue (i.e., the characteristic function can take any form), and focus instead on the {\em structure} of their interaction. 
In Section~\ref{sec:optval}, we show that in order to circumvent computational intractability, several conditions must be met. First, given the results of \cite{ocfgeb}, we search for pseudopolynomial time algorithms that decide the above questions (foregoing this assumption results in NP-hardness, even for a single agent). Second, agents cannot form large coalitions; third, as shown in Section~\ref{sec:arbval}, agent reaction to deviators must also be limited in its scope, as complex reactions to deviation are a source of computational complexity in themselves. 
Inspired by the work of \cite{Demange2004}, we study settings where agent
interactions have a tree structure. Under these assumptions, optimal coalition 
structures and profitable deviations can be found (as shown in Sections~\ref{sec:optval} and~\ref{sec:arbval}); moreover, deciding whether a payoff division is 
stable and whether the core is not empty can be done in polynomial time (see Section~\ref{sec:stable}).

In order to provide a clearer picture of the role of interaction structure in limiting computational complexity, we provide a connection between the treewidth of the {\em agent 
interaction graph}, a graph whose edges correspond to allowable agent interactions, and the computational complexity of finding stable outcomes in OCF 
games. In Section~\ref{sec:tw}, we show how the algorithms described in previous 
sections can be extended to algorithms whose running time is polynomial in $n$ 
and $(\wmax)^k$, where $n$ is the number of agents, $\wmax$ is the maximal weight of any agent, and $k$ is the treewidth of the interaction graph. 

In the second part of our paper (Section~\ref{sec:LBG}), we make no assumptions on the structure of agent interactions, but rather limit our attention to a specific class of OCF games. That is, instead of limiting how agents interact, we limit the way in which they generate profits. More specifically, we study a class of games we term {\em linear bottleneck games} (LBGs); our main insight is that these games are stable even if one assumes a very lenient reaction to deviation. Our results extend those of \cite{markakis2005core}, who show that a subclass of these games (namely, multicommodity flow games) is stable in the classic cooperative sense. 

\section{Preliminaries}
In what follows, we refer to sets using uppercase letters, and to vectors in boldface letters. Given two vectors $\vec x,\vec y \in \R^m$, we say that $\vec x \le \vec y$ if for all $j \in \{1,\dots,m\}$, we have that $x^i \le y^i$. Given a set of agents $S \subseteq \{1,\dots,n\}$ and a vector $\vec x \in \R^m$ we denote by $\vec x^S$ the vector that equals $\vec x$ on all coordinates $i \in S$, and is 0 elsewhere. For a set $S \subseteq \{1,\dots,n\}$ and a vector $\vec x \in \R^n$, we write $x(S) = \sum_{i \in S}x^i$. Finally, we let $\vec e^S$ be the indicator vector of $S$ in $\R^n$: the $i$-th coordinate of $\vec e^S$ is 1 if $i \in S$, and is 0 otherwise. 

We begin by recalling the definition of a classic {\em cooperative game}~\citep{coopbook}. A cooperative game $\cal G$ is a tuple $\tup{N,u}$, where $N = \{1,\dots,n\}$ is a set of agents, and $u:2^N \to \R_+$ is a function that assigns a value to every subset of agents $S \subseteq N$. Every subset $S \subseteq N$, will also 
be referred to as a {\em coalition}. A {\em coalition structure} is a partition of agents $\Pi = \{S_1,\dots,S_m\}$ into disjoint subsets; that is, for all $S_j,S_{j'} \in \Pi$, $S_j \cap S_{j'} = \emptyset$, and $\bigcup_{j = 1}^m S_j = N$. An {\em imputation} for $\Pi$ is a vector $\vec p = (p^1,\dots,p^n)$ that satisfies {\em individual rationality}: $p^i \ge u(\{i\})$ for all $i \in N$, and {\em coalitional efficiency}: for all $S_j \in \Pi$, $p(S_j) = u(S_j)$. In words, an imputation is a division of the payoffs generated by the agents forming $\Pi$ such that agents are individually incentivized to form a coalition structure, and that the payoff from forming a coalition $S_j$ is allocated only to the members of $S_j$. We observe that when forming a coalition structure, agents may not join more than one coalition. While this is a valid assumption in many multi-agent scenarios, it is often the case that agents split their resources among several projects, forming overlapping coalitions.

In what follows, we describe {\em discrete overlapping coalition formation (OCF) games}, a variant of {\em overlapping coalition formation} (OCF) games~\citep{ocfgeb}. 
A discrete OCF game is a tuple $\cal G = \tup{N,\vec W,v}$. The set $N$ is the set of agents as before; each agent $i \in N$ has a {\em weight} $W_i$, 
and these weights are collectively described by the vector $\vec W = (W_1,\dots,W_n)$. 
\gc{I did some changes here...incidentally, perhaps we should be using $w_i$ instead of $W_i$...}
\yz{I believe that $W_i$ is fine, since it is the absolute amount of resources that players have} 

One can think of $W_i$ as the amount of some resource that agent $i$ possesses. We write $\cal W = \{ \vec c \in \Z_+^n \mid \vec c \le \vec W\}$, which is simply the set $\{0,\dots,W_1\}\times\dots\times\{0,\dots,W_n\}$; the set $\cal W$ is the set of all possible ways in which agents can allocate resources to a {\em single} task. The {\em characteristic function} $v:\cal W \to \R_+$ receives as input a vector $\vec c \in \cal W$, and outputs a value $v(\vec c)$, describing the profit generated by the agents in $N$. In order to generate profits, each agent must decide how much of his resources will he contribute to a certain task; the contribution of agent $i$ is simply a number $c^i \in \{0,\dots,W_i\}$, where if $c^i = 0$ then agent $i$ contributes nothing to completing a task, and if $c^i = W_i$ then all of agent $i$'s resources are assigned to a task. A vector $\vec c \in \cal W$ is also called a {\em coalition}; $c^i$ refers to
{\em the contribution of agent $i$}, and the value $v(\vec c)$ is the total profit derived when agents contribute according to $\vec c$. Given a coalition $\vec c \in \cal W$, we define the {\em support} of $\vec c$ to be the set of all agents in $N$ that contribute some of their resources to $\vec c$. Formally, we write $\supp(\vec c) = \{i \in N \mid c^i >0\}$. Agents in $\supp(\vec c)$ are those agents that have some {\em vested interest} in $\vec c$: they are the only ones who may receive a share of the profits made by $\vec c$, and will potentially be affected by any changes to $\vec c$.

\begin{remark}\label{rem:discrete}
	In some departure from \cite{ocfgeb} and \cite{ocfarb}, 
	we focus on algorithmic aspects of finding stable payoff divisions in OCF games. In the general OCF setting, agents are not limited to integer contributions to coalitions, and may contribute any fraction of their resources to a coalition. 
	The generality of the ``classic'' OCF setting, however,
	introduces certain conceptual concerns---e.g., it allows agents to form an arbitrarily large number of different coalitions. 
	Thus, it is unclear how to represent the characteristic function in a general OCF game, as one must, in principle, be able to provide a value for infinitely many possible coalitions.

	 All definitions given here are discrete equivalents of the definitions given for arbitrated OCF games in~\citep{ocfarb}; in fact, the definitions given here can be used nearly verbatim to define non-discrete OCF games.
	 Indeed, if one defines a coalition as a vector $\vec c \in [0,1]^n$ and agent resources as values in $[0,1]$ then the classic OCF game definition is obtained, since
	 in that definition a coalition is a vector $\vec c \in [0,1]^n$, with the contribution value $c^i$ representing the fraction of resources that agent $i$ allocates to the coalition $\vec c$.
\end{remark}

It is often the case that in order to maximize their profits, agents would be interested in forming more than one coalition, and participate in a {\em coalition structure}. A coalition structure, $\CS$, is a finite list of coalitions, i.e., $\CS = (\vec c_1,\dots, \vec c_m)$; we let $|\CS|$ be the number of coalitions in $\CS$. Note that $\CS$ is a list (a multiset), rather than a set of coalitions; this is because it is possible that some coalitions may form more than once, having agents completing the same task several times. We use standard set notation to refer to elements of $\CS$. That is, if $\vec y \in \CS$, then $\vec y$ is one of the coalitions listed in $\CS$; if $\CS' \subseteq \CS$, then $\CS'$ is a list of coalitions that are all listed in $\CS$. Since no agent can contribute more than the total amount of resources he possesses, we require that $\sum_{\vec c \in \CS} c^i \le W_i$ for all $i \in N$. We overload notation and write $v(\CS)$ in order to refer to $\sum_{\vec c \in \CS} v(\vec c)$. Given a subset $S \subseteq N$, we denote by $\calCS(S)$ the set of all coalition structures that can be formed by members of $S$; that is, $\calCS(S)$ consists of all coalition structures $\CS$, such that for all $\vec c \in \CS$, $\supp(\vec c) \subseteq S$. For $S=N$, 
$\calCS(N)$ simply refers to $\calCS$. The {\em weight} of a coalition is written as $\vec w(\CS) = \sum_{ \vec c \in \CS} \vec c$. We say that $\CS$ is {\em efficient} if all agents in $S$ contribute all of their resources to $\CS$, that is, $\vec w(\CS) = \vec W^S$.

Given a coalition structure $\CS \in \calCS$ and a set $S \subseteq N$, the coalition structure $\CS$ {\em reduced to $S$}, $\CS|_S$, is defined to be $\left(\vec w \in \CS\mid \supp(\vec w) \subseteq S\right)$. The coalition structure $\CS|_S$ lists all coalitions in $\CS$ that are fully controlled by the members of $S$; if $S$ decides to deviate, the resources used in $\CS|_S$ are freely available to $S$, but any coalition $\vec w \in \CS\setminus \CS|_S$ contains non-$S$ members as well, hence any changes to such coalitions may lead to negative repercussions for $S$, should it choose to change its contribution to $\vec w$.

Given a game $\cal G = \tup{N,\vec W,v}$, we define the {\em superadditive cover} of $v$ to be the function $v^*:\cal W \to \R_+$, such that for all $\vec c \in \cal W$
$$v^*(\vec c) = \max\{v(\CS) \mid \vec w(\CS) = \vec c\}.$$
Simply put, $v^*(\vec c)$ is the maximal profit that agents can generate by forming coalitions when their total resources are given by $\vec c$. We note that in our model, for any $\vec c \in \cal W$ there always exists a coalition structure $\CS\in \calCS$ such that $\vec w(\CS) = \vec c$, and $v^*(\vec c) = v(\CS)$.

Finally, we observe that if $W_i = 1$ for all $i \in N$, the resulting game is a classic cooperative game with coalition structures~\citep{coalstruct}.

\subsection{Payoff Division}
Having formed a coalition structure, agents will then have to divide the profits generated by the coalitions formed. Given a coalition structure $\CS = (\vec c_1,\dots,\vec c_m)$, an {\em imputation} $\vec x = (\vec x_1,\dots,\vec x_m)$ is a list of $|\CS| = m$ vectors in $\R_+^n$. The vector $\vec x_j$ describes the way the profits from the coalition $\vec c_j$ are divided among the agents. Given a coalition $\vec c \in \CS$, and an imputation $\vec x$, we refer to the division of the profits from $\vec c$ as $\vec x(\vec c)$; the payoff to agent $i$ from $\vec c$ is $x^i(\vec c)$. 
Such profit divisions must satisfy the following rules:
\begin{description}
	\item[Coalitional efficiency:] for all $\vec c \in \CS$, the total payoff from $\vec x(\vec c)$ must equal $v(\vec c)$; that is, $\sum_{i = 1}^n x^i(\vec c)= v(\vec c)$.
	\item[No side payments:] for all $\vec c \in \CS$, if $i \notin \supp(\vec c)$, then $x^i(\vec c) = 0$. This condition simply means that agents that did not contribute any resources to the coalition $\vec c$ may not partake in the profits generated by $\vec c$.
	\item[Individual rationality:] for all $i \in N$, the total payoff to $i$ must exceed the most that $i$ can get by working alone, or $v^*(\vec W^i)$. Formally: $\sum_{\vec c \in \CS} x^i(\vec c) \ge v^*(\vec e^i)$ for all $i \in N$.
\end{description}
The set of all possible imputations for a coalition structure $\CS$ is denoted by $I(\CS)$; an {\em outcome} is a pair $(\CS,\vec x)$, where $\CS$ is a coalition structure and $\vec x$ is in $I(\CS)$. We also define $p_i(\CS,\vec x)$ to be the total payoff to agent $i$ under $(\CS,\vec x)$: $p_i(\CS,\vec x) = \sum_{\vec c \in \CS}^m x^i(\vec c)$; given some set $S \subseteq N$, we similarly define $p_S(\CS,\vec x)$ to be the total payoff to the set $S$ under $(\CS,\vec x)$: $p_S(\CS,\vec x) = \sum_{i \in S}p_i(\CS,\vec x)$. Note that our definition of an outcome in OCF games is a generalization of the notion of an outcome in classic cooperative games, where an outcome consists of a coalition structure $\Pi = \{S_1,\dots,S_m\}$, and an imputation $\vec p$ for $\Pi$.

\subsection{Deviation and Arbitration Functions}
In a classic cooperative game $\cal G = \tup{N,u}$, a subset of agents deviates from an outcome $(\Pi,\vec p)$ when its total payoff $p(S)$ is less than the profits it can generate on its own, $u(S)$. However, in OCF games, deviation is a much more complicated matter. The classic notion of deviation in cooperative games implicitly assumes that when a set of agents deviates, it may not retain any ties to non-deviators; that is, it measures the desirability of deviation against the most it can make on its own, and assumes that non-deviators will no longer collaborate with a deviating set. This is not necessarily the case in the OCF setting: when a set $S\subseteq N$ deviates from an outcome $(\CS,\vec x)$, it may still retain some resources in coalitions with agents in $N \setminus S$. Deciding whether to retain connections with non-deviators requires that $S$ has prior knowledge of the way non-deviators react to deviations. In other words, in order to assess the desirability of a deviation, $S$ must know how the non-members of $S$ will react to such a move. \cite{ocfgeb} were the first to point out this type of agent behavior, and have shown how different types of non-deviator reactions to deviation lead to different notions of stability. \cite{ocfarb} suggest a general framework for handling deviation in OCF games, which is termed {\em arbitration functions}.

We begin by formally defining a deviation in OCF games. Given an outcome $(\CS,\vec x)$ and an agent set $S$, a {\em deviation of $S$ from $(\CS,\vec x)$} is a coalition structure $\CS'$ whose coalitions describe the resources that $S$ withdraws from each coalition $\vec c \in \CS\setminus \CS|_S$. We let $\vec d_{\CS'}(\vec c)$ denote the coalition in $\CS'$ which describes the resources that $S$ withdraws from the coalition $\vec c$, omitting the $\CS'$ subscript when it is understood from context. Given a deviation $\CS'$, we require that $\vec d(\vec c)$ satisfies $\vec d(\vec c) \le \vec c$, and $\vec d(\vec c) \le \vec e^S$; the first requirement ensures that no agent in $S$ withdraws more resources than it has invested in $\vec c$, while the second requirement ensures that agents that do not belong to $S$ do not withdraw any resources from $\vec c$. Note that the formal definition of deviation does not refer to the imputation $\vec x$ in any explicit manner.

Having withdrawn resources from the coalition structure $\CS$, non-deviating agents ---the members of $N \setminus S$--- must decide how they react to the deviation. Formally, \cite{ocfarb} define the {\em arbitration function} $\cal A$ to be a function whose input is an outcome $(\CS,\vec x)$, a deviating set $S\subseteq N$, and a deviation $\CS'$ of $S$ from $\CS$. For each $\vec c \in \CS\setminus \CS|_S$, $\cal A$ outputs a value $\alpha_{\vec c}(\CS,\vec x,S,\CS')$, which specifies the total payoff that the coalition $\vec c$ offers $S$ as a result of its deviation. In order for a deviation to be profitable, members of $S$ must first form a coalition structure using all resources available to them, divide the payoffs from that coalition structure, and divide the (possibly negative) payoffs from $\alpha_{\vec c}$ such that every agent in $S$ receives a strictly higher payoff than what he receives under $(\CS,\vec x)$. If this is possible, we say that $\CS'$ is an {\em $\cal A$-profitable deviation} of $S$ from $(\CS,\vec x)$. An outcome $(\CS,\vec x)$ is called {\em $\cal A$-stable} if no subset $S \subseteq N$ can $\cal A$-profitably deviate from $(\CS,\vec x)$, and similarly, a game $\cal G$ is {\em $\cal A$-stable} if there exists an outcome $(\CS,\vec x)$ that is $\cal A$-stable. Suppose that $S$ forms the coalition structure $\CS''$ with the resources it has withdrawn from $\CS$ (and possibly with other unused resources from $S$); the total payoff that $S$ receives from the deviation $\CS'$ under $\cal A$, if it forms $\CS''$, is written as $\cal A(\CS,\vec x,S,\CS') = v(\CS'') + \sum_{ \vec c \in \CS\setminus \CS|_S} \alpha_{\vec c}(\CS,\vec x,S,\CS')$; the most that $S$ can get by deviating is denoted $$\cal A^*(\CS,\vec x,S) = \sup\{\cal A(\CS,\vec x,S,\CS') \mid \CS'\mbox{ is a deviation of } S \mbox{ from } (\CS,\vec x)\}.$$
\cite{ocfarb} provide a simple characterization of $\cal A$-stable games, given in the following theorem.
\begin{theorem}[\cite{ocfarb}]\label{thm:core}
	An OCF game $\cal G = \tup{N,v}$ is $\cal A$-stable if and only if there exists an outcome $(\CS,\vec x)$ such that for all $S \subseteq N$,
	$$p_S(\CS,\vec x) \ge \cal A^*(\CS,\vec x,S).$$
\end{theorem}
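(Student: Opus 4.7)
The theorem is an equivalence, and both directions can be reduced to a single outcome-level equivalence: for any outcome $(\CS,\vec x)$, no subset of agents has an $\cal A$-profitable deviation if and only if $p_S(\CS,\vec x) \ge \cal A^*(\CS,\vec x,S)$ for every $S\subseteq N$. The plan is to prove this outcome-level equivalence by contradiction in both directions; the two implications of the theorem then follow immediately by taking $(\CS,\vec x)$ to be (a) the $\cal A$-stable outcome guaranteed by stability, or (b) any witness outcome satisfying the inequality.

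For the direction ``$p_S\ge\cal A^*(\CS,\vec x,S)$ for all $S$ implies $(\CS,\vec x)$ is $\cal A$-stable,'' I would suppose for contradiction that some $S$ has an $\cal A$-profitable deviation $\CS'$. By the definition of profitability, $S$ can form some coalition structure $\CS''$ with its withdrawn and unused resources and distribute the total $v(\CS'') + \sum_{\vec c\in \CS\setminus\CS|_S}\alpha_{\vec c}(\CS,\vec x,S,\CS') = \cal A(\CS,\vec x,S,\CS')$ among its members so that each $i\in S$ strictly improves. Summing these strict inequalities over $i\in S$ yields $p_S(\CS,\vec x) < \cal A(\CS,\vec x,S,\CS') \le \cal A^*(\CS,\vec x,S)$, contradicting the hypothesis.

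For the converse direction, assume $p_S(\CS,\vec x) < \cal A^*(\CS,\vec x,S)$ for some $S$. By definition of the supremum, there is a deviation $\CS'$ and an internal coalition structure $\CS''$ for $S$ with $\cal A(\CS,\vec x,S,\CS') > p_S(\CS,\vec x)$. Let $\Delta = \cal A(\CS,\vec x,S,\CS') - p_S(\CS,\vec x) > 0$, and assign each $i \in S$ the new total payoff $p_i(\CS,\vec x) + \Delta/|S|$; summing gives exactly $\cal A(\CS,\vec x,S,\CS')$, so the total is affordable from $v(\CS'')$ together with the arbitrated payments $\alpha_{\vec c}$, and each agent strictly improves, making $\CS'$ an $\cal A$-profitable deviation.

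The main obstacle I anticipate is respecting the no-side-payments constraint when distributing the surplus $\Delta$ among members of $S$. Since the payoff that $S$ collectively receives is split between the payoffs internal to $\CS''$ (each subject to its own support restriction) and the external arbitrated payments $\alpha_{\vec c}$ from coalitions $\vec c\in\CS\setminus\CS|_S$ (which involve only agents of $S$ still invested in $\vec c$), one has to argue that these sources jointly reach every agent in $S$ so that the $\Delta/|S|$ bonuses can actually be routed. Each $i\in S$ either participates in some coalition of $\CS''$ (using its available resources $W_i$) or remains invested in some external coalition that yields an $\alpha$-payment, so the assignments can be realized coalition by coalition; once this bookkeeping is made precise, the argument closes.
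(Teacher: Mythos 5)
The paper never proves this statement: it is imported verbatim from \cite{ocfarb}, so there is no in-paper proof to compare against. Judged on its own merits, your argument is the standard one and is essentially correct. The reduction of the game-level claim to the outcome-level equivalence is exactly right, and the first direction is airtight: summing the strict per-agent improvements gives $p_S(\CS,\vec x) < \cal A(\CS,\vec x,S,\CS') \le \cal A^*(\CS,\vec x,S)$. For the converse, note that under the paper's definition of a profitable deviation the deviators simply divide the aggregate quantity $v(\CS'')+\sum_{\vec c\in\CS\setminus\CS|_S}\alpha_{\vec c}(\CS,\vec x,S,\CS')$ among themselves---the text even speaks of dividing ``possibly negative'' arbitration payoffs, which is incompatible with per-coalition support or non-negativity constraints on the post-deviation division---so your equal split of the surplus $\Delta/|S|$ already closes the argument, and the no-side-payments worry you raise in the final paragraph is moot. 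Do be aware, however, that if one \emph{did} impose support constraints on how the deviators share their post-deviation income, your proposed bookkeeping would not repair the proof: an agent $i\in S$ might support only zero-value coalitions of $\CS''$ and receive no arbitration payment, hence could never be made strictly better off, and one would then have to argue via a smaller deviating set, which is delicate because the arbitration function's reaction depends on the identity of the deviators. As the definitions stand here and in \cite{ocfarb}, your proof is fine as written.
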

Theorem~\ref{thm:core} implies that, to verify stability,  there is no need to look for explicit deviations and check if all deviators are strictly better off. Instead, it is enough to simply ensure that the total payoff to $S$ is greater than the total payoff $S$ can receive by deviating---no matter how it deviates.

Before we proceed, let us describe some arbitration functions, which will also be the focus of this paper.\footnote{All arbitration functions mentioned here were introduced by \cite{ocfgeb}; however, \cite{ocfgeb} do not use the term arbitration functions to describe agent reaction to deviation. The term is later introduced in \cite{ocfarb}.}
\begin{description}
	\item[{\em The Conservative Arbitration Function}:] under this function, denoted $\cal A_c$, deviators receive nothing from non-deviators. Hence, $\alpha_{\vec c} \equiv 0$ for any deviation. When reasoning about the desirability of a deviation under $\cal A_c$, $S$ has no incentive to retain any of its resources in coalitions with non-deviators: it will not be paid from those coalitions.
	\item[{\em The Sensitive Arbitration Function}:] rather than assuming that agents will refuse outright to cooperate with deviators, one can take a slightly more ``lenient'' approach. Under the sensitive arbitration function, denoted $\cal A_s$, if a coalition $\vec c$ is changed by the deviation of $S$, then all agents in $\supp(\vec c)$ refuse to cooperate with the agents in $S$ in other coalitions. However, if none of the agents in a (possibly different) coalition $\vec c'$ were affected by the deviation, $S$ is still allowed to retain (all of) its original payoffs from $\vec c'$ under $(\CS,\vec x)$.
	\item[{\em The Refined Arbitration Function}:] the refined arbitration function, denoted $\cal A_r$ assumes an even more lenient approach to deviation. Under $\cal A_r$, if a coalition $\vec c$ is unchanged by a deviation of a set $S$, $S$ is allowed to keep (all of) its original payoffs from $\vec c$ under $(\CS,\vec x)$.
	\item[{\em The Optimistic Arbitration Function}:] finally, agents may be highly agreeable to changes in their coalitions. Suppose that the deviation of a set $S$ from a coalition $\vec c \in \CS\setminus\CS|_S$ is $\vec d(\vec c)$; after $S$ deviates, the value of $\vec c$ is reduced to $v(\vec c - \vec d(\vec c))$. Under $\cal A_o$, the payoff to $S$ from $\vec c$ is simply $v(\vec c - \vec d(\vec c)) - \sum_{i \in N\setminus S} x(\vec c)^i$; that is, if $S$ wants to receive payoff from $\vec c$, it must pay the cost of it withdrawing resources from $\vec c$, and ensure that agents in $N\setminus S$ can still retain their original payoffs.  	
\end{description}
The following example highlights the reasoning behind some of the arbitration functions presented above.

\begin{example}
	Suppose that the current coalition structure contains three coalitions, $\vec c_1,\vec c_2$ and $\vec c_{12}$, in addition to several other coalitions. We have that $\supp(\vec c_1) = \{1,3\}, \supp(\vec c_2) = \{2,3\}$ and $\supp(\vec c_{12}) = \{1,2,3\}$. That is, $\vec c_1$ is a coalition in which agents 1 and 3 collaborate, in $\vec c_2$ agents 2 and 3 collaborate, and in $\vec c_{12}$ agents 1, 2 and 3 work together. Now, suppose that agent 3 wishes to withdraw his resources from the coalition $\vec c_1$. Under the conservative arbitration function, agent 3 will not be paid from neither $\vec c_2$ nor $\vec c_{12}$; under the sensitive arbitration function, agent 3 can only expect payoff from $\vec c_2$: since he changed $\vec c_1$ ---a coalition containing an agent in the support of $\vec c_{12}$--- he cannot expect to receive any payoffs from $\vec c_{12}$. Under the refined arbitration function, both $\vec c_2$ and $\vec c_{12}$ will allow agent 3 to retain any payoffs it received from them, as they were unaffected by the deviation.
\end{example}
We observe that the amount that the coalition $\vec c$ needs to pay a deviating set can be determined by the effects that $S$ has on {\em other} coalitions. For example, under the sensitive arbitration function, $\vec c$ may choose not to pay $S$, despite the fact that $\vec c$ was not affected at all by the deviation.

\section{Finding an Optimal Coalition Structure}\label{sec:optval}
We now begin our formal computational analysis of OCF games, starting with the fundamental problem of finding optimal coalition structures.
We assume that the reader is familiar with standard notions of computational complexity and complexity classes (see \cite{gj79np} for an overview).
We begin by describing the formal computational model that we study in this work. 
Given a discrete OCF game $\cal G = \tup{N,\vec W,v}$, we assume that the value $v(\vec c)$ for all $\vec c \in \cal W$ is computable in polynomial time. Moreover, we assume that the arbitration function can be computed in polynomial time for all inputs. In other words, we assume that we have oracle access to both $v$ and $\cal A$. 
This assumption does not trivially hold---a naive representation of the function $v$ is a list of $|\cal W| = \prod_{i = 1}^n (W_i+1)$ values, one for each possible coalition that the agents may form; since $W_i \ge 1$ for all $i \in N$, it follows that $|\cal W| \ge 2^n$, i.e., it is exponential in the natural problem parameters. 

Studying games with a succinct representation, be it OCF or non-OCF games, is an interesting problem in its own right. In our work, we do not discuss representation issues in OCF games, except Section \ref{sec:LBG}, where we focus on the so-called {\em linear bottleneck games}. Instead, rather than focusing on a particular succinct representation, we make the following simplifying assumption: {\em the agents cannot form arbitrarily large coalitions}---i.e., each coalition size is bounded by a constant. 
Formally, we capture this notion in the following definition.
\begin{definition}[$k$-OCF Games]\label{def:k-ocf}
An OCF game $\cal G = \tup{N,\vec W,v}$ is a $k$-OCF game if, for all $\vec c \in \cal W$, if $|\supp(\vec c)| >k$ then $v(\vec c) = 0$.
\end{definition}
Definition~\ref{def:k-ocf} applies to several real-life scenarios where overlapping coalitions form. In many market scenarios, transactions are performed involving only few parties; in social network applications, agents form pairwise coalitions; in many large-scale collaborative projects, small teams are formed to tackle various aspects, as large teams of collaborators tend to be less efficient. In order to simplify notation, given a set of agents $i_1,\dots,i_k$, let us write $v_{i_1,\dots,i_k}(w_{i_1},\dots,w_{i_k})$ to denote the value of $v$ when agent $i_1$ contributes $w_{i_1}$, agent $i_2$ contributes $w_{i_2}$ and so on. For example, $v_{i,j}(w_i,w_j)$ is the value of agents $i$ and $j$ collaborating, where agent $i$ contributes $w_i$ and agent $j$ contributes $w_j$. We note that the number of coalitions that can have a positive value in $k$-OCF games is $\binom{n}{k}\wmax^k$, so if $k$ is a constant, then the characteristic function for a $k$-OCF game can be represented by a polynomial number of bits.

\subsection{Finding an optimal coalition structure}

The problem of ``finding an optimal coalition structure'' in an OCF setting can be rephrased as follows.
Given a $k$-OCF game $\cal G = \tup{N,\vec W,v}$, we are interested in the following question: is the maximal (``optimal'') profit that agents expect to (collectively) accumulate
by forming an overlapping coalition structure 
greater than a given value?
\begin{definition}[$\optval$]\label{def:optval}
An instance of the $\optval$ problem is given by a discrete OCF game $\cal G = \tup{N,\vec W,v}$, a coalition $\vec c \in \cal W$ and a value $V \in \Q$, and is denoted by $\optval(\cal G,\vec c,V)$. It is a ``yes'' instance if and only if $v^*(\vec c) \ge V$.
\end{definition}
Letting $\wmax = \max_{i \in N}W_i$, the following simple proposition is an important first step in computing solutions for OCF games.
\begin{proposition}\label{prop:optval}
Given an OCF game $\cal G$, a coalition $\vec c$, and a parameter $V$,
if $|\supp(\vec c)| = m$, then $\optval(\cal G,\vec c,V)$ is decidable in time polynomial in $(\wmax+1)^m$.
\end{proposition}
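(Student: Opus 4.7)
The plan is to set up a dynamic program over the finite sublattice of sub-coalitions of $\vec c$. Observe first that any coalition structure $\CS$ with $\vec w(\CS) = \vec c$ uses only agents in $\supp(\vec c)$: if some $i \notin \supp(\vec c)$ had $c^i_{\CS}>0$ for some $\vec c_{\CS}\in\CS$, then the $i$-th coordinate of $\vec w(\CS)$ would be strictly positive, contradicting $\vec w(\CS)=\vec c$. Hence the relevant state space is $\mathcal R = \{\vec r\in\Z_+^n : \vec 0 \le \vec r \le \vec c\}$, and $|\mathcal R| = \prod_{i\in\supp(\vec c)}(c^i+1) \le (\wmax+1)^m$.

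For each $\vec r\in\mathcal R$ define $f(\vec r) = v^*(\vec r)$. I would establish the recurrence
\[
f(\vec 0) = 0, \qquad f(\vec r) \;=\; \max_{\vec 0 < \vec c' \le \vec r}\bigl\{\,v(\vec c') + f(\vec r - \vec c')\,\bigr\} \text{ for } \vec r \neq \vec 0,
\]
by conditioning on an arbitrary coalition picked out of an optimal coalition structure with weight $\vec r$. Correctness is immediate in both directions: removing one coalition $\vec c'$ from any $\CS$ with $\vec w(\CS)=\vec r$ leaves a coalition structure of weight $\vec r - \vec c'$, and conversely any pair consisting of a coalition $\vec c'$ and a coalition structure of weight $\vec r - \vec c'$ can be concatenated to yield a coalition structure of weight $\vec r$. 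The recurrence may ``recompute'' the same coalition structure via several choices of $\vec c'$, but since we take a maximum this causes no harm.

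Algorithmically, I would fill in the table $f$ in any order consistent with the partial order $\le$ on $\mathcal R$ (for instance, increasing $\sum_i r^i$, ties broken lexicographically). Each of the at most $(\wmax+1)^m$ cells is computed by enumerating the at most $(\wmax+1)^m$ candidates $\vec c' \le \vec r$, performing one oracle call to $v$ and one table lookup per candidate. The total running time is therefore $O\!\left((\wmax+1)^{2m}\right)$ oracle calls plus arithmetic on values of bit-length polynomial in the input size --- this is polynomial in $(\wmax+1)^m$. Finally, return ``yes'' iff $f(\vec c) \ge V$.

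The only genuinely delicate point is bookkeeping on the enumeration of $\{\vec c' : \vec c'\le \vec r\}$, which must itself be efficient; this is straightforward because the set is exactly the product $\prod_{i\in\supp(\vec c)}\{0,\dots,r^i\}$ and can be generated in time linear in its size. No step presents a substantive obstacle; the content of the proposition is essentially that restricting to $\supp(\vec c)$ shrinks the exponential state space from $(\wmax+1)^n$ to $(\wmax+1)^m$.
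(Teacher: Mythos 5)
Your proof is correct and follows essentially the same route as the paper: a dynamic program over the sub-coalitions $\vec r \le \vec c$ using the recurrence $v^*(\vec r) = \max_{\vec 0 < \vec c' \le \vec r}\{v(\vec c') + v^*(\vec r - \vec c')\}$, with the state space bounded by $(\wmax+1)^m$. Your explicit base case $v^*(\vec 0)=0$ and exclusion of $\vec c'=\vec 0$ even tidies up a circularity the paper's recurrence leaves implicit, but the argument is the same in substance.
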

\begin{proof}
We observe that
$$v^*(\vec c) = \max\left\{v(\vec c),\{v^*(\vec c - \vec d) + v(\vec d) \mid \vec d \le \vec c;\vec d \ne \vec c\}\right\}.$$
Observe that the number of coalitions $\vec d$ such that $\vec d \le \vec c$ is at most $(\wmax+1)^m$, so computing $v^*(\vec c)$ requires computing at most $(\wmax+1)^m$ values of $v^*$, which require a comparison of at most $(\wmax+1)^m$ values each, for a total running time that is polynomial in $(\wmax+1)^m$.
\end{proof}
Proposition~\ref{prop:optval} implies that if $|\supp(\vec c)|$ is a constant, and $\wmax$ is polynomial in $n$, then $\optval(\cal G,\vec c,V)$ can be computed in time polynomial in $n$. This result is the best one can hope for: \cite{ocfgeb} show via a reduction from the \textsc{Knapsack} problem~\citep{gj79np}, that when agent weights are large, $\optval$ is NP-complete. We now turn to study $\optval$ when the group of agents involved is large---i.e., $|\supp(\vec c)|$ is not a constant. We stress that when $|\supp(\vec c)|$ is not a constant, this simply means that a large group of agents wants to form a coalition structure. The coalition sizes in the coalition structure will still be small in our setting. Proposition~\ref{prop:optval} was concerned with a small number of agents wanting to find the best way to share resources among themselves; when the number of agents is not small, the actual coalitions that they form may still be. For example, if $|\supp(\vec c)| = 100$, this means that there are 100 agents whose available resources are given by $\vec c$. Suppose that the agents may only conduct pairwise interactions; in that case, agents simply wish to find the best way for them to form smaller coalitions using the resources available as per $\vec c$. 

We begin by showing the following negative result.
\gc{It is kind of confusing to say that
``the group of agents involved is large'', but then talk about 2-OCF games, and, immediately after the proof, say that ``agent interactions are limited to pairwise interactions''.
We could say more clearly / state again what ``coalition size'' is, as opposed to ``agent interactions'' in this setting.
However, all this might be heralding good news, in a sense - we could give this hardness result a positive spin:
In light of proposition 2, one could indeed argue that, in terms of having some hope to answer optval, we are limited to ``small $k$''-OCF games, say $2$-OCF games, with
{\em a limited support of, say, $|\supp(\vec c)=2$}. That's actually ok: In the numerous real-world settings where 2-OCF games are present, {\em e.g., matching markets}, we do not actually
care about coalition sizes (i.e., support) of $>2$: since no ``third party'' can ever hope to get anything from ``getting involved with'' a couple(since their collective value is then 0), 
why even bother contributing resources to a coalition with these agents? Thus, we could argue the following: in real-life scenarios where ``pairs'' of agents have already begun interacting with each other before
others arrive, and where newcomers do not seem to be adding value but -on the contrary- seem to be ``messing up'' the order of things, the hardness problem will not even present itself.
Does this sound about right?}
\yz{I have made some slight rewording of the explanation, tell me if the above is reasonable.\\ Vangelis: I am fine with the current wording. In fact we are too detailed and we could even cut this discussion shorter, it should be clear what we mean here.}

\begin{proposition}\label{prop:hard}
$\optval$ is NP-complete even if one assumes that $\cal G$ is a 2-OCF game, and that $\wmax = 3$.
\end{proposition}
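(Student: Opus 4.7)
Membership in NP is immediate: any witness coalition structure $\CS$ realising $\vec w(\CS)=\vec c$ contains at most $3n/2$ coalitions, since every agent holds at most $\wmax=3$ units and each nontrivial pairwise coalition consumes at least one unit from each of its two members; verifying $v(\CS)\ge V$ via the oracle for $v$ thus runs in polynomial time.

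For hardness my plan is to reduce from MAX-SAT restricted to instances in which every variable occurs in at most $3$ clauses (this restriction is known to be NP-hard, e.g.\ for MAX-2SAT). Given such an instance with variables $x_1,\dots,x_n$ and clauses $C_1,\dots,C_m$, I would build a 2-OCF game $\cal G$ with $\wmax=3$ as follows. For each variable $x_i$ introduce three agents $v_i^T$, $v_i^F$, $u_i$, all of weight $3$; for each clause $C_j$ introduce one agent $c_j$ of weight $1$. The characteristic function is zero everywhere except on two families of pairwise coalitions. A \emph{locking} coalition in which $u_i$ and $v_i^T$, or $u_i$ and $v_i^F$, each contribute their full $3$ units has value $B$ for a suitably large constant (say $B=10$). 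A \emph{literal} coalition in which $v_i^T$ (resp.\ $v_i^F$) and $c_j$ each contribute $1$ unit has value $1$ whenever $x_i$ (resp.\ $\neg x_i$) occurs in $C_j$. All remaining coalition values are $0$.

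The crux of the argument is showing that every optimum uses, for each $i$, exactly one of the two locking coalitions. Indeed, if $u_i$ is not locked then the triple $(u_i,v_i^T,v_i^F)$ can contribute at most $6$ through literal coalitions---each of $v_i^T$ and $v_i^F$ holds $3$ weight units and every literal coalition consumes one of them---which is dominated by $B$. Pairing $u_i$ with $v_i^F$ frees the full weight of $v_i^T$ for literal coalitions and encodes $x_i=\text{True}$; the other lock encodes $\text{False}$. Under the induced Boolean assignment, each clause agent $c_j$ has weight $1$ and so enters at most one literal coalition, contributing value $1$ exactly when some literal of $C_j$ is satisfied. Consequently $v^*(\vec W)=nB+s(\varphi)$, where $s(\varphi)$ is the optimum of the MAX-SAT instance, and $\optval(\cal G,\vec W,nB+k)$ is a YES-instance iff the MAX-SAT instance admits an assignment satisfying at least $k$ clauses.

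The main obstacle I anticipate is proving the locking lemma precisely: a careful case analysis has to rule out mixed configurations that earn partial locking value together with additional literal contributions, which is where the choice of $B$ strictly exceeding the maximum literal value accessible to a single triple (bounded thanks to the occurrence restriction on the source instance) becomes essential. Once this step is in place, the remaining verifications---polynomial size of the reduction and preservation of the 2-OCF and $\wmax=3$ properties---are mechanical.
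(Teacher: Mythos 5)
Your reduction is correct in substance, but it is a genuinely different route from the paper's. The paper reduces from \textsc{Exact Cover by 3-Sets}: one weight-$3$ agent $a_S$ per set $S$ and one weight-$1$ agent $a_i$ per element, with each $(1,1)$-coalition between $a_S$ and $a_i$ ($i\in S$) worth $2$ and the full-weight singleton coalition of $a_S$ worth $5$. Each set-agent thus chooses between earning $5$ alone or $6$ by splitting into three unit coalitions with its elements, and the splitting option can be exercised by $|A|/3$ disjoint set-agents simultaneously iff an exact cover exists, giving the exact threshold $5t+|A|/3$ with no large constants and no bounded-occurrence hypothesis on the source problem. You instead reduce from bounded-occurrence MAX-SAT via a three-agent variable gadget with a large locking value $B$ and unit literal coalitions counting satisfied clauses; both constructions exploit the same structural feature of $2$-OCF games with $\wmax=3$ (a weight-$3$ agent choosing between committing its weight wholesale and splitting it into up to three unit pairwise coalitions), but your gadget is heavier and needs the occurrence bound so that three units of a free literal agent suffice to serve every clause containing that literal. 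The ``locking lemma'' you flag as the main obstacle is in fact easier than you fear: since partial locks are worth $0$ and $u_i$ appears with nonzero value only in full locks, an unlocked triple yields at most $6$, and the exchange that withdraws $v_i^F$'s (at most three, unit-value) literal coalitions and forms the lock gains at least $B-3>0$, so every optimum locks each variable and $v^*(\vec W)=nB+s(\varphi)$ follows; there are no ``partial locking value'' configurations to rule out. One small correction on membership in NP: the witness structure may contain singleton coalitions, so the bound on $|\CS|$ is the total weight ($3n$ in your instance, $n\wmax$ in general) rather than $3n/2$; this is immaterial, as it remains polynomial, which is all the argument needs.
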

\begin{proof}
First, we observe that $\optval$ is in NP; it suffices to guess
a coalition structure $\CS$ such that $\vec w(\CS) = \vec c$ and check whether $v(\CS)\ge V$; note that
the size of this coalition structure is at most $n\wmax$,
which is polynomial in the input size, assuming that $\wmax$ is polynomial in $n$.

For the hardness proof, we provide a reduction from
{\sc Exact Cover by 3-Sets (X3C)},~\cite{gj79np}.
Recall that an instance $\cal X = \tup{A,\cal S,\ell}$ of {\sc X3C} is given by
a finite set $A$, $|A|=3\ell$,
and a collection of subsets $\cal S  = \{S_1,\dots,S_t\}$
such that $S_j \subseteq A$ and $|S_j| = 3$ for all $j=1, \dots, t$. It is a ``yes''-instance
if $A$ can be exactly covered by sets from $\cal S$; that is, if there exists a subset $\cal S'\subseteq \cal S$ such that $\bigcup_{S \in \cal S'} S = A$, and for any two $S,T \in \cal S'$ we have $S \cap T = \emptyset$.

Given an instance $\cal X = \tup{A,\cal S}$ of {\sc X3C},
we construct a discrete 2-OCF
game $\cal G(\cal X) = \tup{N,v}$ with $\wmax=3$, as follows. We have an agent $a_i$ of weight 1
for every element $i \in A$ and an agent $a_S$ with weight 3
for every $S \in \cal S$. The characteristic function is defined as follows:
if $i\in S$, then the value of both $a_i$ and $a_S$ forming a coalition where each of them contributes a weight of 1 is 2; that is, if $\supp(\vec c) = \{a_i,a_S\}$ and 
agent contributions are $c^{a_i} = c^{a_S} = 1$, then $v(\vec c) = 2$. Moreover, agents corresponding to sets in $\cal S$ can generate profits by working alone: if they dedicate all their weight to forming a singleton coalition, they generate a profit of 5. In other words, if $\supp(\vec c) = \{a_S\}$ and $c^{a_S} = 3$ then $v(\vec c) = 5$.
The value of every other partial coalition in the game $\cal G(\cal X)$ is $0$.

Let $S\in \cal S$ be $\{x, y, z\}$, and consider the set of agents
$G_S=\{a_S, a_x, a_y, a_z\}$.
Collectively, the agents
in $G_S$ can earn $6$ if $a_S$ forms a partial coalition
with each of $a_x$, $a_y$, and $a_z$, and contributes
one unit of weight to each of these coalitions;
in any other coalition structure, $G_S$ earns at most $5$.
Hence, $\cal X = \tup{A,\cal S}$ admits an exact cover if and only if
$v^*(\vec W)\ge 6\frac{|A|}{3}+5(t-\frac{|A|}{3})=5t+\frac{|A|}{3}$. Let $V$ be this value.

To see the correctness of our reduction, suppose that $\cal X$ is a ``yes'' instance of {\sc X3C}. Then there exists some $\cal S' \subseteq \cal S$ of size $\frac{|A|}{3}$ that exactly covers $A$. In that case,
$$v^*(\vec W) \ge \sum_{S \in \cal S'}v^*(\vec W^{G_S}) + \sum_{S \notin \cal S'}v^*(\vec W^{G_S}) = 6\frac{|A|}{3} + 5(t-\frac{|A|}{3}) = V.$$
On the other hand, suppose that $v^*(\vec W) \ge 6\frac{|A|}{3} + 5(t - \frac{|A|}{3})$. Then this means that there is a subset of $\cal S$, $\cal S'$, of size at least $\frac{|A|}{3}$ such that for all $S \in \cal S'$ we have that $v^*(\vec W^{G_S}) = 6$. Note that since $|S| = 3$ for all $S \in \cal S'$ and since $S\cap T = \emptyset$ for all $S, T \in \cal S'$, it must be that $|\cal S'| = \frac{|A|}{3}$, i.e., $\cal S'$ is a partition of $A$; thus, $\cal S'$ is indeed an exact cover of $A$.
\end{proof}
Proposition~\ref{prop:hard} severely limits our prospects of computing optimal coalition structures: even if agents are limited to pairwise interactions, and agent weights are small constants, $\optval$ remains hard. Observe that the hardness of $\optval$ implies the hardness of most other problems of interest in our work: the most that a set can gain by deviating, the stability of a given game, and the possibility of deviation from a given outcome of a game. Thus, in order to proceed, we must first identify some limiting conditions that make $\optval$ computationally tractable.

\subsection{Limiting Interactions in OCF Games}
\cite{Demange2004} shows that if one assumes a hierarchical agent interaction structure in a cooperative game, then the core of the game is not empty; moreover, it is possible to find a core imputation in polynomial time. We now show how to adapt this interesting result to the OCF setting. The formal model we propose is not unlike the one presented by Demange. An agent {\em interaction graph} is a graph $\Gamma =\tup{N,E}$, where the edges in $E$ represent valid agent interactions; given an OCF game $\cal G = \tup{N,\vec W,v}$, the game $\cal G$ {\em reduced to $\Gamma$}, denoted $\cal G|_\Gamma= \tup{N,v|_\Gamma}$ is defined as follows: for every $\vec c \in \cal W$, if the nodes in $\supp(\vec c)$ induce a connected subgraph in $\Gamma$ then $v|_\Gamma(\vec c) = v(\vec c)$, otherwise $v|_\Gamma(\vec c) = 0$. Such graph restrictions on agent interaction are known in classic cooperative game theory as {\em Myerson graphs} \citep{myerson1977graphs}. Unfortunately, limiting agent interaction is not a sufficient condition for the tractability of $\optval$. As shown in~\cite{ocfgeb}, finding an optimal coalition structure is hard, even for a single agent, if his weight is sufficiently large. Moreover, even if weights are small but one does not limit the size of allowable coalitions, assuming that agents interact based on a hierarchical tree structure does not aid computational complexity, as shown by the following proposition.
\begin{proposition}\label{prop:hard2}
$\optval$ is NP-hard, even for the family of instances $(\cal G|_T,\vec c,V)$ where $T$ is a tree and agent weights are constant.
\end{proposition}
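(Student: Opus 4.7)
The plan is to give a polynomial-time reduction from a weakly NP-hard problem such as \textsc{Subset Sum}. The key observation is that, in contrast with the setting of Proposition~\ref{prop:hard}, the coalition \emph{size} is no longer restricted in the family we consider; even a star tree with $n$ leaves supports coalitions of size $n+1$, and this turns out to be enough to encode the choice of an arbitrary subset while keeping every individual agent weight a small constant.

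Concretely, given a \textsc{Subset Sum} instance $(a_1,\dots,a_n,\tau)$, I would let the interaction tree $T$ be the star on $n+1$ nodes with centre $c$ and leaves $\ell_1,\dots,\ell_n$, and set $W_i = 1$ for every agent. The characteristic function would assign $v(\vec c)=1$ to every coalition $\vec c$ whose support is of the form $\{c\}\cup\{\ell_i : i\in S\}$, with every member contributing its full weight, whenever $\sum_{i\in S} a_i = \tau$; all other coalitions would receive value $0$. This $v$ is clearly evaluable by an oracle in polynomial time. I would query $\optval$ on the full agent vector $\vec W$ with threshold $V=1$.

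Correctness would follow from two observations. First, since every agent has weight $1$, any coalition structure contains at most one coalition that involves $c$, so $v^*(\vec W)\in\{0,1\}$; moreover $v^*(\vec W)=1$ if and only if there exists $S\subseteq\{1,\dots,n\}$ with $\sum_{i\in S} a_i = \tau$. Second, every support containing $c$ induces a connected subgraph of the star, so the tree restriction $v|_T$ agrees with $v$ on every coalition that carries positive value; hence passing to $\cal G|_T$ does not change the superadditive cover. This gives $(\cal G|_T,\vec W,1)$ as a ``yes'' instance of $\optval$ exactly when the given \textsc{Subset Sum} instance is ``yes'', and the whole reduction runs in polynomial time.

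The hard part is not the reduction itself but rather reconciling it with Proposition~\ref{prop:optval}, which shows that $\optval$ is tractable whenever $|\supp(\vec c)|$ is a constant. In our construction the only coalitions carrying positive value have support of size up to $n+1$---the entire agent set---so Proposition~\ref{prop:optval} does not apply. This pinpoints the conceptual message of the proposition: tractability of $\optval$ on tree-structured interactions really hinges on jointly bounding the agent weights \emph{and} the size of the supports of profitable coalitions; restricting the interaction graph alone, even to a tree, leaves room to encode a genuinely NP-hard subset-selection problem.
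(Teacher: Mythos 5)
Your reduction establishes the stated NP-hardness, but by a genuinely different route from the paper. The paper reduces from \indset: all original agents keep weight $1$, a hub agent gets weight $2$, and hardness comes from that agent having to split its two units between a maximum independent set (earning $1$) and the complementary vertex cover (earning $\eps/(|S|+1)$); the source problem is strongly NP-hard and the hard instances are genuinely overlapping, since the hub must participate in two coalitions. You reduce from \textsc{Subset Sum} on a star with all weights equal to $1$, hiding the numbers $a_1,\dots,a_n,\tau$ inside the polynomial-time computable valuation; your counting argument (the centre can fully power at most one positive-value coalition, so $v^*(\vec W)\in\{0,1\}$) is sound, the tree restriction is respected because every positive-value support contains the centre, and the weak NP-hardness of \textsc{Subset Sum} is immaterial here because the numbers sit in the description of $v$ rather than in the weights, so on your instances even algorithms polynomial in $n$ and $\wmax$ are ruled out (there $\wmax=1$). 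What the paper's construction buys, and yours does not, is consistency with the discussion that immediately follows the proposition: with all weights equal to $1$ your game is, by the paper's own observation, a classic cooperative game with coalition structures, whereas the paper explicitly remarks that the proposition ``does not hold for classic cooperative games'' (appealing to Demange's theorem) and stresses that lowering its weight-$2$ agent to weight $1$ trivializes its instance. Your proof does not contradict any theorem---Demange's result concerns core imputations rather than the coalition-structure-generation problem $\optval$, and maximizing an oracle-given valuation over subsets attached to a star centre is hard already in the classic setting---but it does mean the hardness you exhibit stems from unbounded coalition support under oracle access rather than from the overlapping-coalition feature the paper wants to isolate; to sit alongside the paper's remark you should either make that clarification explicit or, as the paper does, give at least one agent weight $2$ so the hard instances are genuinely OCF.
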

\begin{proof}
Our reduction is from the $\indset$ problem \citep{gj79np}. An instance of $\indset$ is a tuple $\tup{\Gamma,m}$, where $\Gamma = \tup{N,E}$ is a graph and $m$ is an integer. It is a ``yes'' instance if $\Gamma$ contains an independent set of size at least $m$, and is a ``no'' instance otherwise. Recall that an independent set in $\Gamma$ is a subset of vertices $S \subseteq N$ such that if $i,j \in S$, then the edge $\{i,j\}$ is not in $E$. In other words, it is a subset of vertices that do not share an edge with one another. Given an instance $\tup{\Gamma= \tup{N,E},m}$ of $\indset$, we construct the following instance of $\optval$: we set the player set to be $N' = N\cup\{n+1\}$, and have the interaction graph $T$ connect all vertices in $N$ with $n+1$ via an edge. We set all agents in $N$ to have a weight of 1, whereas agent $n+1$ has a weight of 2. The characteristic function $v$ is defined as follows: given a set $S \subseteq N$, if it is an independent set in $\Gamma$, and player $n+1$ allocates a weight of 1 to working with $S$, then the value of the resulting coalition is 1. If $S$ forms a vertex cover of $\Gamma$ (i.e., all vertices in $\Gamma$ share an edge with the vertices in $S$), then the value of the resulting coalition is $\frac{\eps}{|S|+1}$, where $\eps$ is some constant much smaller than 1. In order to form an optimal coalition structure, $n+1$ must allocate a weight of 1 to the maximal independent set, and a weight of 1 to working with the rest of the vertices. This is because if $S$ is an independent set, then $N \setminus S$ is a vertex cover. We conclude that the value of the optimal coalition structure in the resulting game is more than $1+ \frac{\eps}{n-m+1}$ if and only if there exists an independent set of size at least $m$.
\end{proof}
We note that Proposition~\ref{prop:hard2} does not hold for classic cooperative games: this would be a contradiction to Demange's theorem. However, it is worth noting that the game described in Proposition~\ref{prop:hard2} is ``almost'' a classic cooperative game: there is only one agent whose weight is not 1, and his weight is only 2. Indeed, changing the weight of player $n+1$ from 2 to 1 in the above theorem would result in a game with a trivially optimal coalition structure: forming the grand coalition $N\cup\{n+1\}$.

Propositions~\ref{prop:hard} and~\ref{prop:hard2} suggest the following three conditions are {\em necessary} in order to find an optimal coalition structure in polynomial time:
\begin{enumerate}
\item
Limiting agent weights.
\item
Limiting agent interactions.
\item
Limiting coalition size.
\end{enumerate}
Dropping any of these three limitations results in hard instances of $\optval$.
We can, however, compute an optimal structure for games whose interaction graphs are trees in time polynomial in $\wmax$, assuming that $\cal G$ is a 2-OCF game. Given a tree $T = \tup{N,E}$ whose root is some $r \in N$, and a player $i \in N$, let $T_i$ be the subtree rooted at $i$, and $C_i(T)$ be the children of $i$ in $T$. Given a graph $\Gamma$, we refer to the nodes of $\Gamma$ as $N(\Gamma)$. Finally, given a coalition $\vec c \in \cal W$, we write $(\vec c_{-i},w)$ to denote the coalition $\vec c$ with the $i$-th coordinate replaced by $w$. For ease of exposition, we say that $\cal G$ has a {\em tree interaction structure} if there exists some tree $T = \tup{N,E}$ such that $\cal G = \cal G|_T$; this way, $T$ is part of the game description, and does not have to be an additional input.
\begin{theorem}\label{thm:vtree}
For the family of discrete 2-OCF games with a tree interaction structure, $\optval$ can be decided in time polynomial in $\wmax$ and $n$.
\end{theorem}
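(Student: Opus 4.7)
The plan is a bottom-up dynamic program on the tree $T$. Root $T$ at an arbitrary node $r$; for each $i\in N$, let $T_i$ denote the subtree rooted at $i$ and $C_i$ the children of $i$. Because $\cal G$ is a 2-OCF game with interaction graph $T$, every coalition $\vec c$ with $v(\vec c)>0$ has $\supp(\vec c)$ either a singleton $\{i\}$ or an edge $\{i,j\}\in E(T)$. Consequently, once we fix how much of each agent $i$'s weight is used in pair coalitions with its parent, the subtrees $T_{j_1},\dots,T_{j_d}$ for $j_k\in C_i$ become completely independent optimization problems, linked only through agent $i$ itself. This decoupling is what makes the DP go through.

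Concretely, define $f_i(w)$ to be the maximum value of a coalition structure whose coalitions all have support in $T_i$ and which uses exactly $c^i - w$ units of agent $i$'s weight (so $w$ units are reserved for the pair coalition with $i$'s parent). The desired quantity is $v^*(\vec c)=f_r(0)$, since $r$ has no parent. To set up the recurrence I precompute two auxiliary tables: $v_i^*(s)$, the best value attainable by partitioning $s$ units of $i$'s weight into singleton coalitions of $i$ (a standard integer-partition DP that runs in $O(\wmax^2)$ per agent), and, for every edge $\{i,j\}\in E(T)$, the table $h_{i,j}(a,b)$ giving the best value attainable by splitting $(a,b)$ units into a multiset of pair coalitions on that edge (a two-dimensional DP of size $O(\wmax^2)$ with $O(\wmax^2)$ work per entry, i.e., $O(\wmax^4)$ per edge). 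Then I enumerate a three-way split of $i$'s $c^i-w$ internal units: $s$ units to singletons and $a_k$ units to pair coalitions with each $j_k\in C_i$, with each child $j_k$ independently choosing how many of its own units $b_k\in\{0,\dots,c^{j_k}\}$ to commit to those pair coalitions. Aggregating the children one at a time via an auxiliary table
\[
F_i^{(k)}(t) \;=\; \max_{\,0\le a_k\le t,\ 0\le b_k\le c^{j_k}} \Bigl\{\,F_i^{(k-1)}(t-a_k)+h_{i,j_k}(a_k,b_k)+f_{j_k}(b_k)\,\Bigr\},
\]
with $F_i^{(0)}(0)=0$ and $F_i^{(0)}(t)=-\infty$ for $t>0$, the recurrence closes as
\[
f_i(w) \;=\; \max_{0\le t\le c^i - w}\Bigl\{\,F_i^{(|C_i|)}(t) + v_i^*(c^i - w - t)\,\Bigr\}.
\]

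Each table $F_i^{(k)}$ has $O(\wmax)$ entries and each is computed with $O(\wmax^2)$ work, contributing $O(\wmax^3)$ per parent-child pair; summed over the $n-1$ edges of $T$ and the $O(\wmax^4)$ edge precomputations, the total running time is $O(n\wmax^4)$, which is polynomial in $n$ and $\wmax$, deciding $\optval$ by comparing $f_r(0)$ with $V$. I expect the main obstacle to be not the complexity analysis but justifying the decoupling claim carefully: one must verify that every optimal coalition structure $\CS$ with $\vec w(\CS)=\vec c$ can be partitioned into pieces associated with each edge and each vertex of $T$, with no cross-subtree interaction beyond the scalar weights $a_k,b_k$ on the boundary edges — which follows directly from the 2-OCF hypothesis restricted to $T$, since every positive-value coalition is either wholly contained in some subtree or sits on a single edge joining a vertex to its parent. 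The bookkeeping for the split of $i$'s weight among singletons, children edges, and the parent edge is exactly what $F_i^{(k)}$ is designed to handle, and the nonnegativity of $v$ guarantees that the ``exactly equals $\vec c$'' constraint in the definition of $v^*$ causes no loss relative to the ``at most'' variant.
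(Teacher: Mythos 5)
Your proof is correct and takes essentially the same approach as the paper's: a bottom-up dynamic program on the rooted tree that aggregates children one at a time, with tables indexed by how much of each agent's weight is reserved for the edge to its parent (your $f_i$, $F_i^{(k)}$, and $h_{i,j}$ play exactly the roles of the paper's $v_{T_i}^*$, $v_{T_{i,j}}^*$, and $v_{i,j}^*$, the last of which the paper obtains from Proposition~\ref{prop:optval} rather than an explicit edge table). The only differences are a harmless reparameterization (tracking weight reserved for the parent rather than weight given to the subtree) and your more explicit precomputation and complexity accounting.
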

\begin{proof}
Let $\cal G = \tup{N,\vec W,v}$ be a discrete 2-OCF game. We will show how to compute $v^*(\vec W)$; however, our result easily holds for general $\vec c \in \cal W$: simply look at the game where agent weights are as per $\vec c$. 

We arbitrarily choose some player $r \in N$ to be the root of the interaction graph, and process the players in $N$ from the leaves up to the root. The key observation to make is that in order to find an optimal allocation of the players' resources, each node in the tree needs to decide how much weight is to be allocated to its subtree, how much is to be allocated to collaborating with its parent, and how much is to be allocated to working alone. Note that according to Proposition~\ref{prop:optval}, both $v_i^*(x)$ and $v_{i,j}^*(x,y)$ can be computed in time polynomial in $\wmax$. Given a node $i \in N$, let $C_i$ be the children of $i$ in the interaction graph. Let $v_{T_i}^*(w)$ be the most that the subtree $T_i$ can make if agent $i$ allocates $w$ to working with $T_i$. We note that
$$v_{T_i}^*(w) = \max_{\stackrel{y_i + \sum_{c \in C_i} w_c = w}{\forall c \in C_i:0\le x_c\le W_c}}\{v_i^*(y_i) + \sum_{c\in C_i}v_{i,c}^*(w_c,x_c) + v_{T_c}^*(W_c - x_c)\}.$$
Using this recurrence relation we obtain the following dynamic programming method of finding an optimal coalition structure.

Suppose that we have already computed $v_{T_c}^*(w)$ for all $c \in C_i$ and all $w = 0,\dots,w_c$. Now, let us write $C_i = \{c_1,\dots,c_m\}$, and let $T_{i,j}$ be the subtree $T_i$, but with the subtrees $T_{c_{j+1}},\dots,T_{c_m}$ removed. Let $v_{T_{i,j}}^*(w)$ be the maximal revenue that can be generated if agent $i$ invests $w$ in working with $T_{i,j}$. We also write $T_{i,0}$ to be the tree $T_i$ with all subtrees removed, i.e., the tree comprised of the singleton $\{i\}$. Now, suppose that we have already computed $v_{T_{i,j'}}^*(w)$ for all $j' = 0,\dots,j-1$ and all $w = 0,\dots,W_i$. In that case, we can compute $v_{T_{i,j}}^*(w)$ in time polynomial in $\wmax$ and $n$, as
$$v_{T_{i,j}}^*(w) = \max_{\stackrel{x+z = w}{y \le W_{c_j}}}\{v_{i,c_j}^*(x,y) + v_{T_{i,j-1}}^*(z)+v_{T_{c_j}}^*(W_{c_j}-y)\},$$
allowing us to find the value of $v_{T_{i,j}}^*(w)$ in time polynomial in $\wmax$, and $v_{T_i}^*(0)$, $\dots$, $v_{T_i}^*(W_i)$ in time polynomial in $\wmax$ but linear in $|C_i|$. Going through all the nodes from the leaves to the root, we obtain that the value of the optimal coalition structure is simply the value $v_{T_r}^*(W_r)$.
\end{proof}

\section{Computing Optimal Deviations}\label{sec:arbval}
In Section~\ref{sec:optval} we identified three key conditions for the computational tractability of {\em finding an optimal coalition structure} in OCF games. Before we turn to computational aspects of {\em stability} in OCF games, let us study the problem of deciding the most that a set can obtain by deviating from a given outcome. Formally, we are interested in the following problem:
\begin{definition}[$\arbval$]\label{def:arbval}
An instance of the $\arbval$ problem is given by a tuple of the form $\tup{\cal G,\cal A,S,(\CS,\vec x),V}$, where $\cal G = \tup{N,v}$ is an OCF game, $\cal A$ is some arbitration function, $S$ is a subset of $N$, $(\CS,\vec x)$ is an outcome of $\cal G$, and $V\in\mathbb Q$ is some parameter. It is a ``yes'' instance if $\cal A^*(\CS,\vec x,S) \ge V$, and is a ``no'' instance otherwise.
\end{definition}
We note that any hardness results obtained for $\optval$ are immediately inherited by $\arbval$, as the two problems are identical if one assumes that non-deviators behave according to the conservative arbitration function---i.e., give zero payoffs to deviators regardless of the nature of their deviation.
However, $\arbval$ is a considerably more complex problem than $\optval$; in order to ensure that $\arbval$ can be decided in polynomial time, one must make assumptions not only on the structure of the game $\cal G$, but also on the way that agents react to deviation. In a sense, computational complexity can arise by making the payment structure of $\cal A$ sufficiently complex. This is shown in the following proposition.

\begin{proposition}\label{prop:arbnp}
If there exists a polynomial time algorithm that
can decide $\arbval$, when restricted to instances $\tup{\cal G,\cal A,S,(\CS,\vec x),V}$ such that $\cal G = \tup{N,\vec W,v}$ is a $2$ player game, then $P = \NP$.
\end{proposition}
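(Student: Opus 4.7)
The plan is to reduce from Subset Sum, exploiting two independent sources of combinatorial explosion that remain available even when $|N|=2$: a coalition structure with $m$ joint coalitions admits $2^m$ distinct deviations by a single agent, and the arbitration function is only required to be polynomial-time computable, so it is free to ``recognize'' whether a given deviation encodes a solution to an NP-hard instance.

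Given a Subset Sum instance $(a_1,\ldots,a_m,T)$, I would construct a trivial $2$-player game $\cal G = \tup{\{1,2\},\vec W,v}$ with $W_1 = W_2 = m$ and $v\equiv 0$, the coalition structure $\CS = (\vec c_1,\ldots,\vec c_m)$ where $\vec c_j = (1,1)$ for every $j$, the all-zero imputation $\vec x$ (which trivially meets coalitional efficiency, no side payments, and individual rationality because $v \equiv 0$ forces $v^*(\vec e^i)=0$), deviating set $S = \{1\}$, and threshold $V = 1$. The constraints $\vec d(\vec c_j) \le \vec c_j$ and $\vec d(\vec c_j) \le \vec e^{\{1\}}$ force every deviation of $S$ to be specified by a binary vector $(d_1,\ldots,d_m)\in\{0,1\}^m$ describing how much weight player $1$ withdraws from each $\vec c_j$. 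I would then define
$$\alpha_{\vec c_j}(\CS,\vec x,S,\CS') = \begin{cases}1/m & \text{if } \sum_{k=1}^m d_k\, a_k = T,\\ 0 & \text{otherwise,}\end{cases}$$
which is obviously computable in polynomial time in the sizes of $\CS'$ and the hard-coded Subset Sum instance.

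Because $v\equiv 0$, any coalition structure $\CS''$ that $S$ forms with its withdrawn resources contributes $v(\CS'') = 0$, so the total deviation payoff equals $\sum_{j=1}^m \alpha_{\vec c_j}$. This sum equals $1$ precisely when $\{j : d_j = 1\}$ encodes a subset summing to $T$, and equals $0$ otherwise. Hence $\cal A^*(\CS,\vec x,S) \ge 1$ if and only if the given Subset Sum instance has a solution, so any polynomial-time algorithm for $\arbval$ on this restricted family would decide Subset Sum in polynomial time, yielding $P = \NP$.

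The only ``hard part'' is really conceptual: one must be comfortable letting the reduction load the NP-hardness into the arbitration function $\cal A$ rather than into $v$ or into the interaction structure. This is in fact the point of the proposition---polynomial-time oracle access to $\cal A$ is by itself not enough to avoid hardness, which is what motivates the structural restrictions on arbitration functions (conservative, sensitive, refined, optimistic) that the rest of the section studies.
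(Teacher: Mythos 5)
Your reduction follows the same high-level strategy as the paper's: with only two players, load the NP-hardness into a bespoke, non-local, polynomial-time arbitration function, exploiting that a deviation from $m$ joint coalitions is one of $2^m$ withdrawal patterns (the paper reduces from \setcover{} rather than \textsc{Subset Sum}, which is immaterial). Under the paper's literal definition of an arbitration function---a polynomial-time function returning an arbitrary value $\alpha_{\vec c}$ for each $\vec c\in\CS\setminus\CS|_S$---your argument is logically complete: with $v\equiv 0$ the all-zero imputation is a valid outcome, $\CS|_{\{1\}}=\emptyset$ so the deviations of $S=\{1\}$ are exactly the $0/1$ vectors $(d_1,\dots,d_m)$, and $\cal A^*(\CS,\vec x,\{1\})\ge 1$ holds iff some subset of the $a_j$ sums to $T$.

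The point to be careful about is that your $\cal A$ pays the deviator out of coalitions whose value is $0$ before and after the deviation and whose imputation pays everyone $0$. The definition as stated in this paper does not forbid this, but it sits badly with the intended semantics of ``the total payoff that the coalition $\vec c$ offers $S$'': in the arbitration framework the paper builds on, the optimistic function is meant to be the most lenient possible reaction, i.e., a coalition can offer deviators at most $v(\vec c-\vec d(\vec c))-\sum_{i\notin S}x^i(\vec c)$, and under any such feasibility requirement your construction collapses, since then $\cal A^*\le 0$ for a game with $v\equiv 0$. The paper's reduction is immune to this objection: its $v$ is nontrivial, and the only positive amount the arbitration function ever promises player 1 is the payoff $5(t+2)$ that he already holds from the high-value coalition $(2,2)$ under $(\CS,\vec x)$---effectively a refined-style reaction whose ``keep your payoff'' condition is the non-local set-cover test. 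If you want your simpler \textsc{Subset Sum} version to be equally robust, re-anchor the payments: add one valuable coalition $\vec d$ from which player 1 receives payoff $1$ under $\vec x$, and let $\cal A$ return that payoff from $\vec d$ (and $0$ from all other coalitions) exactly when the withdrawal pattern from the unit coalitions sums to $T$; then every promised payment is backed by the outcome and the same equivalence goes through.
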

\begin{proof}
We will show that
if such an algorithm exists, it can be used to solve instances
of $\setcover$~\cite{gj79np}. Recall that an instance
of $\setcover$ is given by a set of elements $A$,
a collection of subsets $\cal S = \{S_1,\dots,S_t\} \subseteq 2^A$
and $\ell \in \N$; it is a ``yes''-instance if $A$ can be covered
by at most $\ell$ sets from $\cal S$.

Given an instance of $\setcover$, $\tup{A,\cal S,\ell}$
such that $|\cal S| = t$,
consider a $2$-player discrete OCF game where $w_1=w_2=t+2$.
We define $v$ in the following manner. First, players get a payoff of 1 for each unit of resource devoted to working alone, i.e. $v_1(x) = v_2(x) = x$ for all $x \in \{0,\dots,t+2\}$. We also set $v_{1,2}(1,1) = 2$, and $v_{1,2}(2,2) = 10(t+2)$. All other coalitions can have an arbitrary value (we assume with no loss of generality that it is 0).

We define an outcome $(\CS,\vec x)$ as follows. players 1 and 2 form $t$ coalitions where each one of them devotes 1 unit of resources to working together, and an additional coalition where both invest 2 units; that is, $\CS = (\vec c_1,\dots,\vec c_t,\vec d)$, such that $\vec c_j = (1,1)$ for all $1\le j\le t$, and $\vec d = (2,2)$. We define $\vec x = (\vec x_1,\dots,\vec x_t,\vec y)$ as follows: $\vec x_j = (0,2)$ is the payoff division from coalition $\vec c_j$, and $\vec y = (5(t+2),5(t+2))$. In other words, we allocate the payoffs from $\vec c_1,\dots,\vec c_t$ to player 2, and split the payoff from $\vec d$ equally.

We define the arbitration function $\cal A$ as follows. Given that player 1 wishes to deviate from $(\CS,\vec x)$, by withdrawing only from the coalitions $\vec c_{j_1},\dots,\vec c_{j_s}$, he receives no payoff from the coalitions $\vec c_1,\dots,\vec c_t$, and will only get to keep his payoff from $\vec d$ if the collection $\{S_j \in \cal S\mid j \notin\{j_1,\dots,j_s\}\}$ is a set cover of $A$. For any other input, we can define an arbitrary output for $\cal A$ (without loss of generality, let us assume that $\cal A$ behaves as the refined arbitration function on other inputs). Under this arbitration function, player 1 wants to withdraw as much resources as possible from $\vec c_1,\dots,\vec c_t$, but do so in a manner that the coalitions he keeps intact correspond to a set cover of $A$.

We observe that $\cal A^*(\CS,\vec x,\{1\}) \ge 5(t+2) + t - \ell$ if and only if $\tup{A,\cal S,\ell}$ is a ``yes'' instance of $\setcover$. First, if there is a subset $\cal S' \subseteq \cal S$ such that $|\cal S'|\le \ell$, then by withdrawing from the coalitions corresponding to $\cal S \setminus \cal S'$, and allocating the withdrawn resources to working alone, player 1 ensures that he receives a payoff of at least $5(t+2) + t - \ell$; on the other hand, if there exists a set of coalitions $\CS' = (\vec c_{j_1},\dots,\vec c_{j_s})$ such that withdrawing from $\CS'$ ensures that the payoff to player 1 is at least $5(t+2) + t - \ell$, then it must be that the set of coalitions that player 1 chose to withdraw from does not contain $\vec d$, and corresponds to a set $\cal S' \subseteq \cal S$ such that $\cal S \setminus \cal S'$ is a set cover of $A$.
\end{proof}

\begin{remark}\label{rem:optval_vs_stabarb}
We contrast Proposition~\ref{prop:arbnp} with Proposition~\ref{prop:optval}: computing the most that a set can gain with a given set of resources is computationally much easier than deciding what is the most it stands to gain by deviating. This issue does not arise in classic cooperative games: a set assesses the desirability of deviation by considering the most it can make on its own, which only requires computing $v(S)$.
\end{remark}

It seems that the hardness of deciding $\arbval$ stems from the fact that the payoff from a coalition $\vec c$ to a deviating set $S$ is determined by the way $S$ affects other coalitions. In the reduction used in Proposition~\ref{prop:arbnp}, the payoff to player 1 from the coalition $\vec d$ was determined by the deviation from other coalitions. In other words, the arbitration function determines the payoff to a deviating set $S$ based on the global behavior of $S$.

This observation motivates the following definition.

\begin{definition}
An arbitration function $\cal A$ is {\em local} if the payoff from a coalition $\vec c$ depends only on the effect of the deviating set $S$ on $\vec c$, regardless of the input to $\cal A$. In other words, for any game $\cal G = \tup{N,\vec W,v}$, any outcome $(\CS,\vec x)$, any set $S \subseteq N$ and any deviation $\CS'$ of $S$ from $\CS$, the payoff to $S$ from $\vec c \in \CS\setminus\CS|_S$ depends only on $\vec d(\vec c)$, $\vec x(\vec c)$, $S$, and $\cal G$.
\end{definition}
We note that the conservative, refined and optimistic arbitration functions are local: the payoff from the the conservative arbitration function is 0 for all inputs; the payment from the refined arbitration function is $x(S)$ if $\vec d(\vec c) = 0^n$ and is 0 otherwise, and the payment from the optimistic arbitration function is $\max\{v(\vec c - \vec d(\vec c)) - \sum_{i \notin S}x(\vec c)^i,0\}$.
In contrast, the arbitration function used in the proof of Theorem~\ref{prop:arbnp}
is non-local. Another example of a non-local
arbitration function is the sensitive arbitration function, as the payoff to a set from a coalition $\vec c$ depends on which agents in the support of $\vec c$ were hurt by the deviation of $S$ from other coalitions.

When one is limited to the class of local arbitration functions, it is indeed possible to decide $\arbval$ in time polynomial in $|\CS|$ and $\wmax^{|S|}$, where $S$ is the deviating set. 
\begin{theorem}\label{thm:localarb}
$\arbval$ is decidable in time polynomial in $|\CS|$ and $\wmax^{|S|}$ for all instances $\tup{\cal G,\cal A,S,(\CS,\vec x),V}$ such that $\cal A$ is local.
\end{theorem}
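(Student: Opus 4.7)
The plan is to exploit the decomposition of the deviation payoff that locality enables. Write $\CS_{\mathrm{ext}} = \CS \setminus \CS|_S$ for the coalitions in which $S$ interacts with non-members, and set $\vec C^S = \sum_{\vec c \in \CS_{\mathrm{ext}}} \vec c^S$, the fixed $S$-contribution to these coalitions. A deviation is specified by a choice, for each $\vec c \in \CS_{\mathrm{ext}}$, of a withdrawal $\vec d(\vec c) \le \vec c^S$. Having withdrawn the aggregate $\vec t := \sum_{\vec c \in \CS_{\mathrm{ext}}} \vec d(\vec c)$, the deviators may form $\CS''$ out of every resource no longer tied up in a mixed coalition, so the usable weight is exactly $\vec r_{\vec t} = \vec W^S - \vec C^S + \vec t$. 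Thus
$$\cal A(\CS,\vec x,S,\CS') = v(\CS'') + \sum_{\vec c \in \CS_{\mathrm{ext}}} \alpha_{\vec c}(\CS,\vec x,S,\CS'),$$
and by locality each summand $\alpha_{\vec c}$ depends only on $\vec d(\vec c)$ together with instance-fixed data.

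First, I would precompute $v^*(\vec r)$ for every $\vec r \le \vec W^S$. Since $|\supp(\vec r)| \le |S|$, the recursion underlying Proposition~\ref{prop:optval} populates the whole table in time polynomial in $(\wmax+1)^{|S|}$. Next, for every $\vec c \in \CS_{\mathrm{ext}}$ and every admissible $\vec d \le \vec c^S$, I would call the oracle for $\cal A$ once and cache the resulting $\alpha_{\vec c}(\vec d)$; this costs at most $|\CS_{\mathrm{ext}}|\cdot(\wmax+1)^{|S|}$ queries.

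The main step is a multi-dimensional knapsack dynamic program indexed by the partial sum $\vec t \le \vec C^S$. Processing the coalitions of $\CS_{\mathrm{ext}}$ one by one, I would maintain a table $M(\vec t)$ storing the maximum of $\sum_{\vec c'} \alpha_{\vec c'}(\vec d(\vec c'))$ taken over all partial assignments whose running total equals $\vec t$. Each update enumerates the $(\wmax+1)^{|S|}$ candidate values of $\vec d(\vec c)$ against the $(\wmax+1)^{|S|}$ stored entries, so the full DP takes time polynomial in $|\CS|$ and $\wmax^{|S|}$. The algorithm then returns $\max_{\vec t}\bigl(v^*(\vec r_{\vec t}) + M(\vec t)\bigr)$ and compares with $V$; correctness follows from the decomposition above and the observation that every feasible $\CS''$ is $S$-supported with total weight of the form $\vec r_{\vec t}$ for some achievable $\vec t$.

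The principal obstacle --- and the precise place where locality is used --- is the independence of the per-coalition choices: the construction in Proposition~\ref{prop:arbnp} shows that without it, $\alpha_{\vec c}$ can encode a global combinatorial condition coupling the withdrawals from different coalitions, which defeats any decomposition. Under locality the only surviving coupling is through the aggregate $\vec t$, which an $|S|$-dimensional DP handles exactly; the remaining verifications of runtime and of the identity $\cal A^*(\CS,\vec x,S) = \max_{\vec t}\bigl(v^*(\vec r_{\vec t}) + M(\vec t)\bigr)$ are then routine.
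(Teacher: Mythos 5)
Your proposal is correct and follows essentially the same route as the paper's proof: caching the local payoffs $\alpha_{\vec c}(\vec d)$ for each mixed coalition, running a knapsack-style dynamic program over the coalitions of $\CS\setminus\CS|_S$ indexed by the aggregate withdrawal vector $\vec t$ (your $M(\vec t)$ is the paper's $A(\vec t;\ell)$ at the last stage), and then maximizing $v^*$ of the freely usable resources plus the DP value over $\vec t$. The only cosmetic difference is that you write the usable weight as $\vec W^S-\vec C^S+\vec t$, which makes the treatment of uncommitted resources slightly more explicit than the paper's $v^*(\vec s+\vec t)$, but the argument and complexity analysis coincide.
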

\begin{proof}
We first observe that a coalition structure $\CS$ has at most
$(\wmax+1)|S|$ coalitions that involve players in $S$.
Given a coalition structure $\CS$, let $\CS'$ be the set of coalitions 
that are supported by both $S$ and $N\setminus S$; i.e., $\CS' = \{\vec c \in \CS\mid \supp(\vec c) \cap S \ne \emptyset;\supp(\vec c) \cap N\setminus S \ne \emptyset\}$. For every $\vec c \in \CS'$, recall that $\vec c^S$ is the coalition $\vec c$ with the coordinates corresponding to members of $N \setminus S$ equal to 0; that is, $\vec c^S$ is the contributions of $S$ to the coalition $\vec c$.
Now, suppose that players in $S$ invest
$\vec s\in \cal W(S)$ units of resources in partial coalitions among themselves ---i.e., $\vec w(\CS|_S) = \vec s$---
and want to withdraw an additional $\vec t\in \cal W(S)$ from $\CS$. They would get
$v^*(\vec s+\vec t)$ from working on their own, plus the most that $S$ can
get from the arbitration function, which depends on the coalitions
affected by this deviation. Thus, in order to determine the most that $S$ can get by deviating from $(\CS,\vec x)$, given that it must withdraw a total of $\vec t$ resources, we must determine how to best withdraw those $\vec t$ resources from $\CS'$. We write $\CS' = (\vec c_1,\dots,\vec c_m)$. 

Let us denote by $A(\vec t;\ell)$ the most that the
arbitration function $\cal A$ will give $S$ if they withdraw $\vec t$ resources from the first $\ell$ coalitions, where $1 \le \ell \le m$. We also write $\alpha_{\vec c}(\vec t)$ to be the payoff to $S$ from the coalition $\vec c$ if $S$ withdraws $\vec t$ from $\vec c$. Since agents may not withdraw more resources than they have invested in a coalition, we set $\alpha_{\vec c}(\vec t) = -\infty$, if $\vec t$ is greater than $\vec c$ in any coordinate. We note that $\alpha_{\vec c}(0^n)$ is the amount that $S$ receives from $\vec c$ if it does not withdraw any resources from $\vec c$. The value $\alpha_{\vec c}(0^n)$ can be any value in $\R$, but the key observation is that if the coalition $\vec c$ has no resources withdrawn from it, it pays $\alpha_{\vec c}(0^n)$ to $S$, regardless of how $S$ affects other coalitions.

By definition, $A(\vec t;1) = \alpha_{\vec c_1}(\vec t)$, for other coalitions we have
$$
A(\vec t;\ell) =
\max\{A(\vec y;\ell-1)+ \alpha_{\vec c_\ell}(\vec t - \vec y)\mid 0^n\le \vec y\le \vec t\}.
$$
This shows that we can compute $A(\vec t;m)$ in
$\cal O(m(\wmax+1)^{|S|})$ steps.
Finally, $\cal A^*(\CS,\vec x,S)$ can be computed as
$\max\{v^*(\vec s+\vec t)+A(\vec t;m)\mid 0^n \le \vec t \le \vec W^S-\vec s\}$, which concludes the proof.
\end{proof}

Coming back now to imposing interaction structures on the cooperation among agents, we provide an analogous positive result to that in Section \ref{sec:optval}. 
In particular, as is the case for computing an optimal coalition structure for games where the interaction graph is a tree, computing the most that a set can get by deviating can be done in time polynomial in $n$ and $\wmax$ if the arbitration function is local and the interaction graph is a tree. In fact, one can see Theorem~\ref{thm:arbtree} as an immediate corollary of Theorem~\ref{thm:vtree}.
\begin{theorem}\label{thm:arbtree}
$\arbval$ is decidable in time polynomial in $n$ and $\wmax$ for all instances $\tup{\cal G,\cal A,S(\CS,\vec x),V}$ such that $\cal A$ is local and such that $\cal G$ is a discrete 2-OCF game with a tree interaction structure. 
\end{theorem}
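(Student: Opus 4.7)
The plan is to extend the tree dynamic program from Theorem~\ref{thm:vtree} so that, in one bottom-up sweep, it simultaneously optimizes the new coalition structure $\CS''$ that $S$ forms within the interaction tree and maximizes the arbitration payoffs from the cross coalitions straddling $S$ and $N\setminus S$. The key enabler is locality of $\cal A$: in a 2-OCF game with a tree interaction graph, every $\vec c \in \CS \setminus \CS|_S$ has support $\{i,j\}$ for some tree edge with exactly one endpoint, say $i$, in $S$, and by locality the payoff from $\vec c$ depends only on how much $i$ withdraws from it.

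First, I would preprocess each tree edge $\{i,j\}$ with $i \in S$, $j \notin S$ by collapsing all coalitions of $\CS$ supported on $\{i,j\}$ into a single one-dimensional function $\Phi_{ij}(y)$, the maximum total arbitration payoff obtainable when $i$ keeps a total of $y$ units across those coalitions. The inner knapsack-style recurrence from the proof of Theorem~\ref{thm:localarb} produces each $\Phi_{ij}$ in time polynomial in $\wmax$.

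Second, rooting the tree arbitrarily, I would define for each $S$-node $i$ the function $F_i(w)$ = maximum $S$-payoff (arbitration plus value of new coalitions in $\CS''$) obtainable from the subtree $T_i$ when $i$ reserves $w$ of its $W_i$ units for interaction with its parent; for each non-$S$ node $i$, let $F_i$ be the analogous scalar, with no reservation parameter, since the cross-edge decision with its parent (if that parent lies in $S$) is made at the parent's DP step. The recurrence folds children in one at a time exactly as in Theorem~\ref{thm:vtree}: an $S$-child $c$ of an $S$-node $i$ contributes $F_c(w_c) + v^*_{i,c}(x_c,w_c)$ as in the original DP, a non-$S$-child $c$ of an $S$-node $i$ contributes $F_c + \Phi_{ic}(y_c)$, the symmetric case is handled at non-$S$ internal nodes, and $i$'s leftover budget is absorbed into $v_i^*(\cdot)$ computed via Proposition~\ref{prop:optval}. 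Each fold costs $\poly(\wmax)$ work, so the full DP runs in time polynomial in $n$ and $\wmax$, and $\cal A^*(\CS,\vec x,S)$ is read off at the root and compared against $V$.

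The only conceptual step that requires care is verifying that the arbitration contributions of cross edges genuinely decouple along the tree, so that each cross edge can be summarized by the scalar function $\Phi_{ij}$ and the DP state at each node stays one-dimensional. This decoupling rests squarely on locality of $\cal A$: for non-local functions such as $\cal A_s$, the payoff from one cross coalition depends on $S$'s behavior at other coalitions, which would force a multi-dimensional DP state and break polynomiality---consistent with the hardness signalled by Proposition~\ref{prop:arbnp}. Under locality the decoupling is immediate, and the theorem follows as a direct adaptation of Theorem~\ref{thm:vtree}.
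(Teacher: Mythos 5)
Your proposal is correct and matches the paper's own argument in essence: both use locality to summarize each cross interaction between an $S$-player and a non-$S$ neighbor by a one-dimensional function of the weight retained (computed via the knapsack-style recurrence of Theorem~\ref{thm:localarb}), and then fold these into the tree dynamic program of Theorem~\ref{thm:vtree}. The only cosmetic difference is bookkeeping: the paper aggregates all of player $i$'s non-$S$ neighbors into a modified singleton value $\bar v_i^*$ and then runs the Theorem~\ref{thm:vtree} DP over $S$ alone, whereas you attach per-edge functions $\Phi_{ij}$ and sweep the whole tree, which is an equivalent formulation with the same polynomial bound in $n$ and $\wmax$.
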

\begin{proof}
Again, we choose an arbitrary $r \in N$ to be the root of the interaction tree. 
Consider some player $i\in S$; $i$ needs to decide how much weight to allocate 
to each non-$S$ neighbor, given the amount that he gives to his parent. 
Let $D_i$ be the set of non-$S$ neighbors of $i$, that $i$ gets some payoff from 
interacting with, under $(\CS,\vec x)$. Let us denote by $\alpha_i(w)$ 
the most that $i$ can make if he devotes $w$ to interacting with $D_i$. 
In order to compute $\alpha_i(w)$, we use dynamic programming. 
We fix an ordering of $D_i$ and 
let $\alpha_i(w;j)$ be the most that player $i$ can get by keeping a weight of $w$ 
in the interaction with the first $j$ agents in $D_i$. 
We also denote by $\cal A_i^*(w;j)$ the payoff to player $i$ if he keeps a weight of $w$ 
in the interaction with the $j$-th player in $D_i$. 
By Theorem~\ref{thm:localarb}, $\cal A_i^*(w;j)$ can be computed in time polynomial in $\wmax$. Further, 
we have $\alpha_i(w;0) = 0$ for all $w$ and
$$
\alpha_i(w;j) = \max\{\alpha_i(x;j-1) + \cal A_i^*(w-x;j)\mid 0 \le x \le w\},
$$
Set 
$$
\bar{v}_i^*(w) = \max\{v_i^*(x)+\alpha_i(w-x) \mid 0 \le x \le w\}.
$$
We now replace player $i$'s $v_i^*$ by $\bar{v}_i^*$ for the purpose 
of computing $v^*(S)$, which is 
doable in polynomial time according to Theorem~\ref{thm:vtree}; this will give us the most that $S$ can 
get by deviating under $\cal A$.
\end{proof}
We observe that Theorem~\ref{thm:arbtree}
holds even if the overall interaction graph is not a tree; it suffices that 
the deviating set $S$ is an acyclic subgraph of the interaction graph. 
This is because in 2-OCF games, interactions can only be between pairs of players; thus, if an agent $i \in S$ decides to withdraw resources from a coalition 
$\vec c$, that coalition can contain at most one other agent $j$. Therefore, no other agent in $S$ is 
concerned with $\vec c$, i.e., agents in $S$ need to keep individual track of how much they withdraw from coalitions supported by $N \setminus S$, which significantly simplifies the computational process of deciding where to withdraw resources from.

\section{Computing $\cal A$-Stable Outcomes}\label{sec:stable}
We now turn to a different computational problem. Having provided efficient procedures for computing optimal deviations and coalition structures in discrete 2-OCF games with tree interaction structures, we are ready to analyze the computational complexity of stability in this class of games. Recall that an OCF game is $\cal A$-stable if there exists some outcome $(\CS,\vec x)$ such that no subset of $N$ can profitably deviate from $(\CS,\vec x)$, i.e., if for all $S \subseteq N$ we have that $$p_S(\CS,\vec x) \ge \cal A^*(\CS,\vec x,S).$$
In this context, we are first interested in the following problem.
\begin{definition}[$\checkcore$]\label{def:checkcore}
An instance of $\checkcore$ is given by a discrete OCF game $\cal G = \tup{N,\vec W,v}$, an arbitration function $\cal A$, and an outcome $(\CS,\vec x)$; it is a ``yes'' instance if $(\CS,\vec x)$ is in the $\cal A$-core of $\cal G$, and a ``no'' instance otherwise. 
\end{definition}

We are now ready to present an algorithm
for checking whether a given outcome is in the $\cal A$-core.
This problem is closely related to that of computing $\cal A^*$:
an outcome $(\CS,\vec x)$ is in the $\cal A$-core if and only if 
the {\em excess} $e(\CS, \vec x, S) = \cal A^*(\CS,\vec x,S) - p_S(\CS, \vec x)$
is non-positive for all coalitions $S\subseteq N$. Thus, we need to check
whether there exists
a subset $S \subseteq N$ with $e(\CS, \vec x, S) > 0$.
Note that it suffices to limit attention
to subsets of $N$ that form connected subgraphs of the interaction graph of the game: if $e(\CS, \vec x, S) > 0$
and $S$ is not connected, then some connected component $S'$ of $S$
also satisfies $e(\CS, \vec x, S') > 0$.

\begin{theorem}\label{thm:checkcore}
If $\cal G$ is a 2-OCF game with a tree interaction structure, and $\cal A$ is local, then $\checkcore$ is decidable in time polynomial in $n$ and $\wmax$ for any instance $\tup{\cal G,\cal A,(\CS,\vec x)}$.
\end{theorem}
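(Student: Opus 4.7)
The plan is to compute $E := \max_{S \subseteq N} e(\CS, \vec x, S)$ in polynomial time and return ``yes'' iff $E \le 0$. As observed just before the statement, it suffices to maximize over $S$ whose induced subgraph in the interaction tree $T$ is connected, since $\cal A^*$ and $p_S$ both decompose additively over disjoint non-interacting components in a 2-OCF tree game with local arbitration.

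Root $T$ at an arbitrary node $r$. Every nonempty connected $S$ has a unique ``top'' vertex $t(S)\in S$ closest to $r$, and $S \subseteq T_{t(S)}$, so $E = \max_v M(v)$, where $M(v)$ is the maximum of $e(\CS, \vec x, S)$ over connected $S \subseteq T_v$ containing $v$. I would compute every $M(v)$ via a bottom-up sweep over $T$. The DP state at a node $x$ is a pair $(\sigma_x, w_x)$: $\sigma_x \in \{0,1\}$ records whether $x$ belongs to the candidate $S$, and $w_x \in \{0,\dots,W_x\}$ records the weight $x$ reserves for its pair coalition with its parent in the post-deviation allocation (forced to $0$ when $\sigma_x = 0$). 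Connectivity enforces that if $\sigma_x = 0$ then no descendant of $x$ lies in $S$, so $G(x,0,0) = 0$; the interesting entries are $G(x,1,w)$, which store the largest contribution to $e(\CS, \vec x, S)$ coming out of $T_x$.

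The essential structural step is to verify that $e(\CS, \vec x, S)$ decomposes additively along $T$ into vertex- and edge-indexed terms. Because $\cal G$ is 2-OCF, every coalition in $\CS$ is either a singleton $\vec c_y^{\text{solo}}$ or a pair $\vec c_{y,z}$ along a tree edge; because $\cal A$ is local, the arbitration payoff from each boundary coalition depends only on the withdrawal at that edge. Hence $e(\CS, \vec x, S)$ is a sum of (i) a solo term $v_y^*(s_y) - x^y(\vec c_y^{\text{solo}})$ for each $y \in S$, (ii) an internal-edge term $v_{y,z}^*(a,b) - x^y(\vec c_{y,z}) - x^z(\vec c_{y,z})$ for each edge of $T$ with both endpoints in $S$, and (iii) a boundary-edge term $\alpha_{\vec c_{y,z}}(\vec d) - x^y(\vec c_{y,z})$ for each edge of $T$ with exactly one endpoint in $S$. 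Each such term attaches to a single vertex or a single tree edge.

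The DP transition at $x$ absorbs one child $c$ at a time, in the style of Theorems~\ref{thm:vtree} and~\ref{thm:arbtree}, but additionally branches on $\sigma_c \in \{0,1\}$. When $\sigma_c = 0$ the edge $(x,c)$ contributes a boundary-edge term whose optimum over withdrawals can be precomputed in time polynomial in $\wmax$; when $\sigma_c = 1$ we convolve $G(c,1,\cdot)$ with $v_{x,c}^*(\cdot,\cdot)$ and charge the original payoffs on that edge. Each convolution step costs $O(\wmax^2)$ per state, yielding total runtime polynomial in $n$ and $\wmax$. The main obstacle is precisely the additive vertex/edge decomposition of $e$: it depends essentially on the locality of $\cal A$, without which a distant withdrawal could modulate the payoff at an edge and couple faraway subtrees (as in Proposition~\ref{prop:arbnp}), and on the 2-OCF restriction, without which a single coalition could straddle three or more subtrees and fail to be attributable to any single edge.
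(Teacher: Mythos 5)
Your proposal is correct and takes essentially the same route as the paper's proof: root the tree, classify connected deviating sets by their topmost vertex, and run a leaves-to-root dynamic program whose states record membership and the weight withheld from the parent, with boundary edges to non-deviators handled via the precomputed local-arbitration payoff functions (your $\sigma_c=0$ branch is the paper's $D_2$ term, your convolution with $v^*_{x,c}$ is its $D_1$ term, and your $G(x,1,w)$ is the paper's $D_x(W_x-w)$, with the top-vertex extraction matching the paper's formula for $E_i$). The explicit vertex/edge additive decomposition of the excess that you state is present only implicitly in the paper through its recurrences, but it is the same argument.
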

\begin{proof}
Fixing an outcome $(\CS, \vec x)$, let us write $p_i:=p_i(\CS, \vec x)$ 
for all $i\in N$. 
As in our previous results, we pick an arbitrary $r\in N$ as a root. 
We say that $S \subseteq N$ is {\em rooted}
at $i \in N$ if $i \in S$ and the members of $S$ 
form a subtree of $T_i$ - the tree rooted in $i$. We observe that every 
set $S \subseteq N$ is rooted at a unique $i\in N$.
Given a vertex $i$, let $E_i$ denote the maximum excess of a set rooted at $i$, that is:
$$E_i = \max\left\{e(\CS,\vec x,S) \mid S \mbox{ is rooted in } T_i\right\}.$$
Clearly, $(\CS,\vec x)$ is not $\cal A$-stable
if and only if $E_i > 0$ for some $i\in N$. It remains to show that all $E_i$ can be computed in time polynomial in $n$ and $\wmax$.
As before, we proceed from the leaves to the root, and terminate (and report that
$(\CS, \vec x)$ is not $\cal A$-stable) if we discover a vertex $i$
with $E_i>0$. If $E_i\le 0$ for all $i\in N$, we report that
$(\CS, \vec x)$ is $\cal A$-stable.

Given two agents $i, j\in N$, let $w_{i,j}$ denote the total weight that $i$ assigns
to interacting with $j$ under $\CS$, i.e.,
$$w_{i,j} = \sum_{\vec c:\supp(\vec c) = \{i,j\}} c^i.$$ 
Observe that since $\cal G$ is a 2-OCF game, it is no loss of generality to assume that $i$ only invests weight in interacting with $j$ only in coalitions that are supported by $i$ and $j$ alone. We begin by defining two auxiliary values.
First, given a neighbor $j$ of $i$, 
we define $\alpha_{i,j}(w)$ to be the most that $\cal A$ will 
give $i$ if he keeps a total weight of $w\le w_{i,j}$ 
in the coalitions that he formed with $j$ in $(\CS, \vec x)$; 
by Theorem~\ref{thm:localarb}, 

The value $\alpha_{i,j}(w)$ is computable in time polynomial in $\wmax$. 
Second, we define $D_i(w)$ to be the maximum excess of a subset rooted at $i$ 
if $i$ were to contribute $w$ to $T_i$ and nothing to his parent $p(i)$. 
In this notation, 
$$
E_i = \max\{D_i(w)+\alpha_{i,p(i)}(y)\mid w+y = w_i, w\ge w_i - w_{i, p(i)}\};
$$
the condition $w \ge w_i - w_{i,p(i)}$ ensures that $p(i)$ is not among the deviators.
It remains to show how to compute $D_i(w)$ in time $\poly(n,\wmax)$ 
for all $i\in N$ and $w$ such that $w_i-w_{i, p(i)}\le w \le w_i$.

Consider an agent $i$ with children
$C_i=\{i_1, \dots, i_\ell\}$, and suppose that we have computed $D_{i_j}(z)$
for each $i_j\in C_i$ and each $z$ such that  $w_{i_j}-w_{i_j, i}\le z\le  w_{i_j}$
(this encompasses the possibility that $i$ is a leaf, as $C_i=\emptyset$ in that case).
For $j=0, \dots, \ell$, let $T_{i, j}$ be the tree
obtained from $T_i$ by removing subtrees rooted at $i_{j+1}, \dots, i_\ell$.
Let $D_i(w;j)$ be the maximum excess of a set rooted at $i$
that is fully contained in $T_{i, j}$, assuming that $i$ contributes $w$ 
to $T_{i, j}$ and nothing to his parent or his children $i_{j+1}, \dots, i_\ell$;
we have $D_i(w) = D_i(w;\ell)$.
We will compute $D_i(w;j)$ by induction on $j$.

We have $D_i(w; 0) = v_i^*(w) - p_i$ for all 
$w = w_i-w_{i, p(i)}, \dots, w_i$.
Now, consider $j>0$.
Agent $i$ can either include $i_j$ in the deviating set or 
deviate (partially or fully) from the coalitions that 
it forms with $i_j$ in $(\CS, \vec x)$. 
Thus, $D_i(w; j)=\max\{D_1, D_2\}$, where
$$
D_1 = \max_{\stackrel{y=0, \dots, w}{z = 0, \dots, w_{i_j}}}
\{D_i(y; j-1) + v_{i,i_j}^*(w-y, z) + D_j(w_{i_j}-z)\}.
$$
and
$$
D_2 = \max_{z=0, \dots, w_{i, i_j}}
\{D_i(w-z; j-1) + \alpha_{i,i_j}(z)\}.
$$
Since both quantities $D_1$ and $D_2$ can be computed in time polynomial in $\wmax$, 
we can efficiently compute $D_i(w;j)$, and hence also 
$D_i(w)$ and $E_i$.
\end{proof}
Not only can the algorithm described in Theorem~\ref{thm:checkcore} decide $\checkcore$ in time polynomial in $n$ and $\wmax$, it can also be used to decide the following, closely related problem. Instead of asking if a given outcome $(\CS,\vec x)$ is $\cal A$-stable, we are also interested in deciding whether a given coalition structure $\CS$ can be stabilized, i.e., whether there exists some division of payoffs among agents in such a way that the resulting outcome is $\cal A$-stable. This is formalized in the following definition.
\begin{definition}\label{def:isstable}
An instance of $\isstable$ is given by a discrete OCF game $\cal G = \tup{N,\vec W,v}$, an arbitration function $\cal A$ and a coalition structure $\CS$. 
\end{definition}
Given a coalition structure $CS = (\vec c_1,\dots,\vec c_m)$, an instance of $\isstable$ $\tup{\cal G,\cal A,\CS}$ is a ``yes'' instance if and only if there exists a payoff division $\vec x\in I(\CS)$, $\vec x = (\vec x_1,\dots,\vec x_m)$, that satisfies the following system of constraints.
\begin{eqnarray}\label{eq:isstable}
\sum_{i \in \supp(\vec c_j)} x_j^i = v(\vec c_j) 				& \forall j \in \{1,\dots,m\}\\
\sum_{i \in S}\sum_{j = 1}^m x_j^i \ge \cal A^*(\CS,\vec x,S) 	& \forall S \subseteq N\nonumber\\
x_j^i \ge 0											& \forall j \in\{1,\dots,m\};\forall i \in \supp(\vec c_j)\nonumber
\end{eqnarray}
The first set of constraints ---also called {\em efficiency} constraints--- ensures that $\vec x$ is indeed a valid imputation, while the second set of constraints ---also called {\em stability} constraints--- ensures that $(\CS,\vec x)$ is in the $\cal A$-core of $\cal G$. The number of constraints in~\eqref{eq:isstable} is exponential in $n$; moreover, there is no guarantee that $\cal A^*(\CS,\vec x,S)$ is a value that is linear in $\vec x$. However, if the set of constraints described in~\eqref{eq:isstable} is linear in $\vec x$, then one can use a simple modification of the algorithm described in Theorem~\ref{thm:checkcore} as a {\em separation oracle}; given a linear program with an arbitrary number of constraints, a separation oracle is an algorithm whose input is a candidate point, and can decide in polynomial time whether the point satisfies all constraints, and can output a violated constraint otherwise. Given such an algorithm, one can decide if there exists a point that satisfies all constraints. 

We note that the set of constraints~\eqref{eq:isstable} is linear for the conservative, refined, and optimistic arbitration functions. Thus, we obtain the following corollary. 
\begin{corollary}\label{cor:isstable}
$\isstable$ is decidable in time polynomial in $n$ and $\wmax$ for all instances $\tup{\cal G,\cal A,\CS}$ such that $\cal G$ is a 2-OCF game with a tree interaction structure, and $\cal A$ is the conservative, refined or optimistic arbitration function.
\end{corollary}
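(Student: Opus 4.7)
The plan is to cast $\isstable$ as a linear-programming feasibility problem with exponentially many constraints, and to solve it by the ellipsoid method using a polynomial-time separation oracle built from the algorithm of Theorem~\ref{thm:checkcore}.

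First, I would verify that, for each of the three arbitration functions, the stability constraints in~\eqref{eq:isstable} can be written as a system of linear inequalities in the variables $x_j^i$. For $\cal A_c$ the right-hand side $\cal A^*(\CS,\vec x,S)$ does not depend on $\vec x$ at all (it is the quantity $v^*(\vec W^S)$ times a constant pattern of withdrawals), so each of the $2^n$ stability constraints is already linear. For $\cal A_r$ and a fixed deviation $\CS'$ of $S$, the payoff $\cal A(\CS,\vec x,S,\CS') = v(\CS'') + \sum_{\vec c\text{ unaffected}} \sum_{i\in S} x^i(\vec c)$ is affine in $\vec x$, and rewriting $p_S(\CS,\vec x)\ge \cal A^*(\CS,\vec x,S)$ as the conjunction of $p_S(\CS,\vec x)\ge \cal A(\CS,\vec x,S,\CS')$ over all deviations $\CS'$ yields a collection (exponential in size) of linear inequalities. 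For $\cal A_o$ each coalition-level payoff $\max\{v(\vec c-\vec d(\vec c)) - \sum_{i\notin S}x^i(\vec c),\,0\}$ is a pointwise maximum of two affine functions; using the identity $\sum_{\vec c}\max\{f_{\vec c},0\}=\max_{T}\sum_{\vec c\in T}f_{\vec c}$ I split each $(S,\CS')$-constraint into one linear inequality per subset $T$ of affected coalitions. In all three cases the set of $\vec x$ satisfying all the constraints of~\eqref{eq:isstable} is therefore a polytope, described by polynomially many efficiency and non-negativity constraints plus an exponentially large family of linear stability constraints.

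Second, I would observe that the efficiency and non-negativity constraints are polynomial in number and can be checked directly, and that the algorithm of Theorem~\ref{thm:checkcore} already supplies a separation oracle for the stability constraints: given a candidate $\vec x$, run the leaves-to-root dynamic program to compute $\max_{i\in N} E_i$; if this maximum is $\le 0$ then $\vec x$ lies in the $\cal A$-core, and otherwise the same DP, traced back from the maximizing vertex, returns a specific subset $S$ (rooted at that vertex), a specific deviation $\CS'$, and (for the optimistic case) a specific subset $T$ of affected coalitions that together identify one violated linear inequality. Instrumenting the recurrences for $D_i(w;j)$ and $E_i$ with back-pointers is a routine modification and does not affect the $\poly(n,\wmax)$ running time.

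Finally, I would plug this separation oracle into the ellipsoid method of Gr\"otschel--Lov\'asz--Schrijver to obtain, in time polynomial in $n$ and $\wmax$, either a stabilizing imputation $\vec x$ or a certificate that none exists. The main obstacle is the linearization step for $\cal A_o$, where the max-of-affine payoffs must be expanded into linear inequalities indexed by $T$ so that the separation oracle can return an actual halfspace rather than a merely convex supergradient; once this expansion is justified, everything else is a straightforward reduction to the machinery already developed in Theorems~\ref{thm:vtree}, \ref{thm:localarb}, and~\ref{thm:checkcore}.
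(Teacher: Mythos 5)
Your proposal matches the paper's own argument: the paper likewise observes that for the conservative, refined, and optimistic arbitration functions the system~\eqref{eq:isstable} is a (possibly exponential) family of linear constraints, and uses the $\checkcore$ algorithm of Theorem~\ref{thm:checkcore} as a separation oracle within an LP-feasibility (ellipsoid-type) procedure to decide $\isstable$ in time polynomial in $n$ and $\wmax$. Your extra details---the subset expansion of the max-of-affine payoffs for $\cal A_o$ and the back-pointer modification so the oracle outputs a violated inequality---are consistent elaborations of the same route rather than a different proof.
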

Indeed, Corollary~\ref{cor:isstable} holds for {\em any} arbitration function for which the set of constraints~\eqref{eq:isstable} is linear.

\section{Beyond Tree Interactions}\label{sec:tw}
In previous sections, we have shown that if the game $\cal G$ is a discrete 2-OCF game with a tree interaction structure, then most relevant stability notions can be computed in time polynomial in $n$, the number of players, and $\wmax$, the maximal weight of any player. We now show how to extend our algorithms to 2-OCF games that have an interaction graph that is not a tree. 
\gc{I guess the following  
is some reason to consider the interaction among agents in $H$ when assessing the stability or optimality of a given coalition structure.}

If $\cal G$ is a 2-OCF game, then its interaction graph is a simply a graph with either simple edges or self edges. Moreover, if $\cal G$ is a $k$-OCF game then its interaction graph contains no edges of size $\ge k$. In this section, we show how our algorithms and their complexity can be parameterized by the {\em treewidth} of the game's interaction graph. The algorithms we describe assume that $\cal H$ is connected; however, all our results hold even if $\cal H$ is not connected, by simply applying our methods to each of the connected components of $\cal H$ separately.
We employ an important graph parameter, called the {\em treewidth}~\citep{robertson1984graph}.
Given a graph $\Gamma = \tup{N,E}$, a {\em tree decomposition} of $\Gamma$ is a tree $\cal T$ whose vertices are subsets of $N$ (we write $V(\cal T)$ to denote the vertices of $\cal T$ and $E(\cal T)$ to denote its edges), and which satisfies the following three conditions
\begin{enumerate}
\item
If $e\in E$ then there is some vertex $S \in V(\cal T)$ such that $e \subseteq S$.
\item
Given any two vertices $S,S' \in V(\cal T)$ such that there is some $i \in N$ that is in $S \cap S'$, $i$ appears in every vertex on the path between $S$ and $S'$.
\end{enumerate} 
Given a tree decomposition $\cal T$ of $\cal H$, let us write $\width(\cal T)$ to be $\max\{|S|\mid S \in V(\cal T)\} - 1$; we define the {\em treewidth} of a hypergraph $\cal H$ to be 
$$\tw(\cal H)\eqdef\min\{\width(\cal T) \mid \cal T \mbox{ is a tree decomposition of } \cal H\}.$$ 
We note that the $-1$ is simply a normalization factor, which ensures that the treewith of trees is 1; in fact, a graph is a tree if and only if its treewidth~\citep{robertson1984graph} is 1. Given a tree decomposition $\cal T$, we write $\nodes(\cal T) \eqdef \bigcup_{S \in V(\cal T)} S$; i.e. $\nodes(\cal T)$ is the set of all agents that are in the nodes of $\cal T$.

We say that a problem is {\em fixed parameter tractable} with respect to a parameter $k$ if the problem can be decided in time $f(k)n^c$ where $f$ is some function of $k$ and $c$ is a constant independent of $k$ and $n$. Intuitively, if $k$ is set to a constant, then the problem can be solved quickly.

Treewidth is often used as a parameter in the parameterized complexity analysis of graph related combinatorial problems; Courcelle's theorem~\citep{courcelle} states that any graph property that can be stated using a fairly standard set of operators (monadic second order logic) is fixed parameter tractable, with the treewidth of the graph being the parameter. Treewidth has also been used in the study of cooperative games, both for studying the computational complexity of finding solution concepts~\citep{greco2011complexity}, and in studying their structure~\citep{meir2013cost}. 

We now generalize the algorithmic results shown in previous sections to 2-OCF games whose interaction graphs have a treewidth of $k$; more specifically, we provide generalization of Theorems~\ref{thm:vtree},~\ref{thm:arbtree} and~\ref{thm:checkcore} for games whose interaction graphs have a treewidth of $k$. We note that deciding whether an interaction graph has a treewidth of $k$ (and finding a tree decomposition of the hypergraph $\cal H$ of width at most $k$) is fixed parameter tractable in $k$. Finally, we overload notation and write $\tw(\cal G)$ to be the treewidth of the interaction graph of $\cal G$, where $\cal G$ is a discrete 2-OCF game.

\begin{theorem}\label{thm:vgraph}
$\optval$ is decidable in time polynomial in $n$ and $\wmax^{\tw(\cal G)+1}$ for all  instances $\tup{\cal G,\vec c,V}$ such that $\cal G$ is a 2-OCF game.
\end{theorem}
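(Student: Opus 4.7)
The plan is to generalize the dynamic programming of Theorem~\ref{thm:vtree} from trees to graphs of bounded treewidth by working over a nice tree decomposition of the interaction graph. First I would compute a tree decomposition $\cal T$ of $\Gamma$ of width $k = \tw(\cal G)$ using Bodlaender's fixed-parameter algorithm, and convert it into a nice tree decomposition with $O(n)$ bags, each of type leaf, introduce, forget, or join, and each of size at most $k+1$. Since every edge of $\Gamma$ lies inside some bag, and $\cal G$ is a 2-OCF game, the support of every value-generating coalition is contained in at least one bag; I would charge each such coalition to the unique topmost bag whose content covers its support, so that every coalition is processed exactly once on the way up the decomposition.

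For each bag $B \in V(\cal T)$, I would maintain a table
\[
f_B : \prod_{i \in B}\{0,1,\dots,W_i\} \to \R_+ \cup \{-\infty\},
\]
where $f_B(\vec u)$ is the maximum total value of a collection of coalitions charged at or below $B$, subject to the constraint that each $i \in B$ has contributed exactly $u_i$ units of weight to those coalitions. The transitions are standard: at a forget node that removes $i$ from a child bag $B'$ to form $B$, set $f_B(\vec u) = \max_{u_i \le W_i} f_{B'}(\vec u, u_i)$; at an introduce node that adds $i$ to $B'$ to form $B$, copy $f_{B'}$ and extend with a fresh coordinate $u_i$, folding in the singleton value $v_i^*(u_i)$ via Proposition~\ref{prop:optval}; whenever $B$ is the topmost bag covering some pair-support $\{i,j\}$, update $f_B$ by maximizing over contributions $(x,y)$ to coalitions $\vec c$ with $\supp(\vec c) = \{i,j\}$ using $v_{i,j}^*(x,y)$ in the same style as in Theorem~\ref{thm:vtree}; and at a join node combining two children with the same bag content $B$, compute the convolution $f_B(\vec u) = \max\{f_B^1(\vec u^1) + f_B^2(\vec u^2) : \vec u^1 + \vec u^2 = \vec u\}$.

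For the running time, each table has at most $(\wmax+1)^{k+1}$ entries; the join step dominates and costs at most $(\wmax+1)^{2(k+1)}$ per entry, and, multiplied by the $O(n)$ bags, this gives a total running time polynomial in $n$ and $\wmax^{k+1}$, as required. The value $v^*(\vec c)$ is read off at the root bag, after initializing every leaf so that each agent's weight budget equals the corresponding coordinate of $\vec c$. The principal obstacle I anticipate is accounting hygiene, namely ensuring (i) that each coalition is incorporated exactly once, and (ii) that each agent's global budget $\sum_{\vec c \in \CS} c^i \le W_i$ is respected across all bags containing $i$. Both issues are handled by the topmost-bag assignment rule combined with the running intersection property: once $i$ disappears from the current bag at a forget step, no further coalition involving $i$ can ever be charged, so the state $u_i$ faithfully tracks $i$'s total resource use by the time $i$ is forgotten.
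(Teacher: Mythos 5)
Your proposal is correct and follows essentially the same route as the paper: a dynamic program over a (rooted) tree decomposition whose tables are indexed by resource vectors of the bag members, giving $(\wmax+1)^{\tw(\cal G)+1}$ states per node and hence the claimed running time. The only difference is cosmetic -- you phrase it via a nice tree decomposition with introduce/forget/join nodes and topmost-bag charging of pair coalitions, whereas the paper processes each bag's children sequentially using $v^*$ restricted to the bag, exactly as in Theorem~\ref{thm:vtree}.
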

\begin{proof}
We again show how to compute an optimal coalition structure when all agents invest all their resources; the reduction to a general coalition $\vec c$ is trivial.
Let $\cal T$ be a tree decomposition of the interaction graph of $\cal G$ such that $\width(\cal T) = k$. Let us choose some $R \in V(\cal T)$ to be the root of $\cal T$; for any $X \in V(\cal T)$, let us write $\cal T_X$ the subtree rooted in the vertex $X$, $p(X)$ to the parent of $X$ in $\cal T$, and $C_X$ to be the children of $X$ in $\cal T$. Intuitively, in order to compute an optimal coalition structure, agents in $X \cap p(X)$ need to decide how much to allocate to their own subtree $\cal T_X$, and how much to allocate to working with their parent. Let us write $\opt( \cal T_X(\vec q))$ to be the value of an optimal coalition structure over the nodes in $\cal T_X$, but with the agents in $X\cap p(X)$ investing only $\vec q$ in working with $\cal T_X$. 
We observe that 
$$\opt(\cal T_X(\vec q)) = \max\left\{v^*(\vec y+ \sum_{Y \in C_X} \vec x_Y) + \sum_{Y \in C_Y}\opt(\cal T_Y(\vec z_Y)) \right\},$$
where $\vec z_Y$ is the amount that the set $X \cap Y$ devotes to working with $\cal T_Y$, and $\vec x_Y$ is what is allocated to working with $X$; $\vec y$ is the vector of resources of $X\setminus \bigcup_{Y \in C_Y}Y$, assuming that those members of $X \cap p(X)$ contribute according to $\vec q$, i.e. $\vec y = \min\{\vec q ,\vec W^{X \setminus \bigcup_{Y \in C_Y} Y}\}$. Thus, it must hold that $\vec y+ \sum_{Y \in C_Y}\vec x_Y+\vec z_Y = \min\{\vec q,\vec W^X\}$, and $\vec x_Y+\vec z_Y \le \vec \min\{\vec q,W^{X\cap Y}\}$ for all $Y \in C_X$.

Taking a similar approach to that used in Theorem~\ref{thm:vtree}, we employ dynamic programming in order to compute $\opt(\cal T_X(\vec q))$. We write $\opt(\cal T_X(\vec q;j))$ to be the most that $\cal T_X$ can make if $X\cap p(X)$ allocates $\vec q$ to working with $\cal T_X$, and only the first $j$ children are considered, where $C_X$ is set to be $\{Y_1,\dots,Y_m\}$. $\cal T_X(\vec q;0)$ is simply $v^*(\min\{W^X,\vec q\})$, and for all $j \ge 1$: 
$$\opt(\cal T_X(\vec q;j)) = \max\left\{\opt(\cal T_X(\vec q - \vec z;j-1) + \opt(\cal T_{Y_j}(\vec z))\mid \vec z \le \min\{\vec q,\vec W^{X \cap Y_j}\}\right\}.$$
To conclude, assuming we have computed $\opt(\cal T_Y(\vec z))$ for all $Y \in C_X$ and all $\vec z$, we can compute $\opt(\cal T_X(\vec q)$ in time polynomial in $\wmax^{\tw(\cal G)+1}$ and linear in $|C_X|$, which implies that the total running time of the dynamic program is polynomial in $\wmax^{\tw(\cal G)}$ and linear in $n$.
\end{proof}

A similar approach can be used in order to compute the most that a set can get by deviating from an arbitrary graph. The same key observation used in Theorem~\ref{thm:arbtree} is made here: in order to compute the most that a set $S$ can get by deviating, we first replace $v_i^*(w)$ with $\bar{v}_i^*(w)$, where $\bar{v}_i^*(w) = \max\{\alpha_i(w-x) + v_i^*(x)\mid 0 \le x \le w\}$, and $\alpha_i(w)$ is the most that $i$ can from the arbitration function if it leaves a total of $w$ of its resources with non-deviators. Having replaced $v_i^*$ with $\bar{v}_i^*$ we run the algorithm described in Theorem~\ref{thm:vgraph} to obtain the following:
\begin{theorem}\label{thm:arbgraph}
$\arbval$ is decidable in time polynomial in $n$ and $\wmax^{\tw(\cal G)+1}$ for all instances $\tup{\cal G,\cal A,\vec c,V}$ such that $\cal A$ is local and $\cal G$ is a 2-OCF game.
\end{theorem}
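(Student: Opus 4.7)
The plan is to follow the two-step reduction sketched immediately before the theorem: first absorb each deviator's potential arbitration gains into an agent-wise ``singleton value,'' then apply the dynamic program of Theorem~\ref{thm:vgraph} to a modified 2-OCF game defined on the deviating set $S$. The structural fact that makes the reduction valid is that in a 2-OCF game every coalition in $\CS \setminus \CS|_S$ has support of size at most two, hence of the form $\{i,j\}$ with $i \in S$ and $j \notin S$. Consequently each deviator independently controls its interactions with non-deviators, and because $\cal A$ is local the arbitration revenue decomposes as a sum of per-agent, per-coalition terms.

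For each $i \in S$, I would first compute $\alpha_i(w)$, the most $i$ can collect from $\cal A$ if it leaves a total weight of $w$ in the coalitions it shares with agents outside $S$; by Theorem~\ref{thm:localarb} applied to the singleton deviating set $\{i\}$, this can be done in time polynomial in $\wmax$ and $|\CS|$ for every $w \in \{0,\dots,W_i\}$. Then I would set
\[
\bar{v}_i^*(w) \;=\; \max\bigl\{\, v_i^*(x) + \alpha_i(w - x) \,\bigm|\, 0 \le x \le w \,\bigr\},
\]
which captures the best payoff $i$ can obtain from $w$ of its own resources, split between solo production and retained arbitration interactions with non-deviators.

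Next I would build a 2-OCF game $\bar{\cal G}$ on player set $S$ in which the pairwise values $v_{i,j}$ are inherited from $\cal G$ (for edges with both endpoints in $S$) and each solo function $v_i^*$ is replaced by $\bar{v}_i^*$. The interaction graph of $\bar{\cal G}$ is the subgraph of $\cal G$'s interaction graph induced by $S$; intersecting every bag of an optimal tree decomposition of $\cal G$ with $S$ shows that the treewidth of $\bar{\cal G}$ is at most $\tw(\cal G)$. Invoking Theorem~\ref{thm:vgraph} on $\bar{\cal G}$ thus returns the value $\opt(\bar{\cal G})$ of an optimal coalition structure among members of $S$ with the modified singleton values, in time polynomial in $n$ and $\wmax^{\tw(\cal G)+1}$. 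Comparing this value to $V$ decides $\arbval$.

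The one thing that requires verification, and which I view as the main obstacle, is showing that $\opt(\bar{\cal G})$ equals $\cal A^*(\CS,\vec x,S)$. In one direction, any deviation $\CS'$ of $S$ together with a subsequent coalition structure $\CS''$ formed from the recovered resources naturally factors into (a) a coalition structure among members of $S$ with value under $v$ equal to $v(\CS'')$ plus pairwise contributions inside $S$, and (b) per-agent arbitration profits $\sum_{i\in S}\alpha_i(w_i)$, where $w_i$ is the weight $i$ chose to leave with non-deviators. This factorization is exact precisely because of the pairwise support of cross-coalitions (no two deviators share a cross-coalition) and the locality of $\cal A$ (the payoffs from disjoint pairwise coalitions are additive). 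These are the exact degrees of freedom encoded by $\bar{v}_i^*$ in $\bar{\cal G}$. In the reverse direction, every feasible solution of the DP on $\bar{\cal G}$ unfolds back to a concrete deviation and follow-up coalition structure attaining the same value. Correctness then follows from this bijection, and the running-time bound follows from the cost of the $\alpha_i$ subroutines (polynomial in $n$ and $\wmax$) plus the cost of the Theorem~\ref{thm:vgraph} invocation (polynomial in $n$ and $\wmax^{\tw(\cal G)+1}$).
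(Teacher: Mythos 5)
Your proposal is correct and follows essentially the same route as the paper: the paper's own argument is precisely to compute $\alpha_i(w)$ via the local-arbitration dynamic program, replace $v_i^*$ by $\bar{v}_i^*(w)=\max\{v_i^*(x)+\alpha_i(w-x)\mid 0\le x\le w\}$, and then run the algorithm of Theorem~\ref{thm:vgraph}. Your added justification of the factorization (pairwise cross-coalition supports plus locality) and of the treewidth bound for the induced subgraph on $S$ only spells out details the paper leaves implicit.
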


Finally, we provide an algorithm for deciding instances of $\checkcore$ that runs in time polynomial in $n$ and $\wmax^{\tw(\cal G)+1}$. 
\begin{theorem}\label{thm:checkcoregraph}
An instance of $\checkcore$ is decidable in time polynomial in $n$ and $\wmax^{\tw(\cal G)+1}$ for all instances $\tup{\cal G,\cal A,(\CS,\vec x)}$ such that $\cal A$ is local and $\cal G$ is a 2-OCF game.
\end{theorem}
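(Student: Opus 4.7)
The natural plan is to run a dynamic program over a tree decomposition of $\cal G$'s interaction graph, combining the ideas of Theorem~\ref{thm:checkcore} (which does a DP over a tree) with those of Theorem~\ref{thm:vgraph} (which handles the extra accounting introduced by bags of size up to $\tw(\cal G)+1$). I would fix a tree decomposition $\cal T$ of width $k = \tw(\cal G)$, root it at an arbitrary bag $R$, and decide whether there exists a set $S \subseteq N$ whose excess $e(\CS,\vec x,S) = \cal A^*(\CS,\vec x,S) - p_S(\CS,\vec x)$ is strictly positive; $(\CS,\vec x)$ is in the $\cal A$-core iff no such $S$ exists.

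For each bag $X \in V(\cal T)$ with parent $p(X)$, I would store a table indexed by a signature $(Y,\vec q)$, where $Y \subseteq X$ represents the guessed intersection $S \cap X$, and $\vec q : X \cap p(X) \to \{0,\ldots,\wmax\}$ records, for each interface agent $i$, the total weight of $i$ committed to activities strictly inside the subtree $\cal T_X$ (either new $S$-coalitions with agents in $\nodes(\cal T_X)$, or weight kept in mixed coalitions owned by some bag in $\cal T_X$). The table entry $f_X(Y,\vec q)$ would be the maximum partial excess attributable to $\cal T_X$ given this boundary commitment; the algorithm outputs ``no'' for $\checkcore$ iff $\max_{Y \subseteq R} f_R(Y,\vec 0) > 0$. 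The recursion at $X$ follows the structure of Theorem~\ref{thm:vgraph}'s $\opt(\cal T_X(\vec q;j))$, augmented by an edge-by-edge case analysis for each pair $\{i,j\} \subseteq X$ whose topmost containing bag is $X$: if $i,j \in Y$, the weight $\CS$ puts on this pair is freed and feeds a $v_{i,j}^*$-style contribution to $S$'s joint reallocation; if exactly one endpoint lies in $Y$, the deviator picks some $w'$ to retain in the existing coalitions with the other, earning $\alpha_{i,j}(w_{i,j}-w')$, which is computable in $\poly(\wmax)$ time by locality and Theorem~\ref{thm:localarb}; if neither endpoint is in $Y$, the pair contributes nothing. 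Whenever an agent $i$ exits $\cal T$ at $X$ (i.e., $i \in X \setminus p(X)$), I would fold in $-p_i$ if $i \in Y$ and the $v_i^*$ term on $i$'s residual solo budget. Combining the children $X_1,\ldots,X_m$ is then an additive inner DP on the index $j=0,\ldots,m$ exactly in the style of Theorem~\ref{thm:vgraph}, with consistency of $Y$ on the shared agents $X \cap X_j$ enforced automatically by the running intersection property of $\cal T$.

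Each bag admits at most $2^{k+1}(\wmax+1)^{k+1} = O(\wmax^{k+1})$ signatures, each combine step takes $\poly(\wmax^{k+1})$ time per child, and there are $O(n)$ bags, so the overall running time will be polynomial in $n$ and $\wmax^{\tw(\cal G)+1}$, as required. I expect the main obstacle to be the bookkeeping of ``who pays what to whom''---ensuring that each $p_i$, each pair's $v_{i,j}^*$/arbitration contribution, and each solo $v_i^*$ is charged to exactly one bag, and that the interface weights $\vec q$ align consistently across parent and children. The standard fix, assigning every agent and every edge to its unique topmost containing bag, should handle this cleanly. A secondary subtlety is the absence of the ``$S$ rooted at $i$'' structure used in Theorem~\ref{thm:checkcore}, but this disappears because the DP enumerates $Y$ at every bag and the running intersection property propagates these guesses coherently, so disconnected candidate deviating sets need no additional machinery.
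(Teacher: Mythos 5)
Your proposal is correct and follows essentially the same route as the paper's proof: a dynamic program over the tree decomposition whose state is the guessed intersection of the deviating set with each bag together with a resource vector bounded by $\wmax$ per coordinate, using locality of $\cal A$ (via Theorem~\ref{thm:localarb}) for the arbitration payoffs and $v^*$-computations for reallocated resources, yielding time polynomial in $n$ and $\wmax^{\tw(\cal G)+1}$. The only differences are bookkeeping (you charge each agent/edge at its topmost bag, while the paper folds the arbitration and $p_S$ terms into a base case $\cal A^*(\CS,\vec x,S,\vec q)$ and enumerates child-bag subsets in the combine step), which does not change the substance of the argument.
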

\begin{proof}
Given an outcome $(\CS,\vec x)$, our goal is to find a subset $S \subseteq N$ such that $e(\CS,\vec x,S)< 0$ if such a subset exists. Let $\cal T$ be the tree decomposition of the interaction graph of $\cal G$, and we again choose some $R \in V(\cal T)$ to be the root of $\cal T$. Take some $S \subseteq X$; let us denote by $\cal T_S$ the subtree rooted in $X$,but with the members of $X \setminus S$ removed from all the nodes in $\cal T_X$. We say that a subset $T$ of $N$ is rooted in $\cal T_S$ if $T \subseteq \nodes(\cal T_S)$ and $S \subseteq T$; that is, $T$ is rooted in $\cal T_S$ only if it contains $S$, as well as being contained in $\nodes(\cal T_S)$. We write $E_S(\vec q)$ to be the excess of the unhappiest subset rooted in $S$, assuming that $S$ devotes only $\vec q \le \vec W^S$ to interacting with $\cal T_S$. 

Now, suppose that we have already computed $E_T(\vec z)$ for all $T \subseteq Y$ where $Y$ is a child of $X$ and for all $\vec z \le \vec W^Y$. Let us set $C_S = \{Y_1,\dots,Y_m\}$, where $C_S = \{Y\setminus (N\setminus S)\mid Y \in C_X\}$; we write $E_S(\vec q;j)$ to be the maximal excess achievable by any subset rooted in $\cal T_S$, with all resources allocated to the first $j$ children of $S$, and assuming that $S$ allocates $\vec q$ resources to working with $\cal T_S$. Therefore, $E_S(\vec q;0)$ is $e(\CS,\vec x,S,\vec q) = \cal A^*(\CS,\vec x,S,\vec q) - p_S(\CS,\vec x)$, where $\cal A^*(\CS,\vec x,S,\vec q)$ is the most that $S$ can get by deviating when it has only $\vec q$ resources to allocate to working with non-deviators and optimize its own payoffs.

Now, when choosing how to deviate with the $j$-th child, $S$ needs to decide how much of its resources to allocate to $Y_j$, and which subset of $Y_j$ to join into to the deviation. It has already joined all members of $S \cap Y_j$, but it now needs to choose an additional subset $T \subseteq Y_j \setminus S$ to bring into the deviation, and demand resources from it in an optimal manner; in other words, 
$$E_S(\vec q;j) = \max\left\{E_S(\vec q';j-1) + v^*(\vec q - \vec q' + \vec r) + E_T(\vec W^T - \vec r)\right\},$$
where the maximization is over all $\vec q' \le \vec q$, all $T \subseteq Y_j$ and all $\vec r \le \vec W^T$. 

Thus, we can compute $E_S(\vec q)$ in time polynomial in $|C_X|$ and $(2\wmax)^{\tw(\cal G)}$, and therefore decide $\checkcore$ in polynomial time as well.
\end{proof}

\section{Linear Bottleneck Games and the Optimistic Core}\label{sec:LBG}
In this section, we deviate from the discrete setting described so far, and assume that agents have rational weights; this is simply a matter of notational ease, as all of the definitions of discrete OCF games carry through to the rational setting. The main objective of this section is to describe a class of cooperative games with overlapping coalitions that has a non-empty optimistic core, and instances of $\optval$, $\arbval$, $\checkcore$ and $\isstable$ are decidable in polynomial time when restricted to this class. Our class of OCF games is motivated by fractional combinatorial optimization scenarios. 
In the previous sections, we make no assumptions on the structure of the characteristic function, but rather use underlying agent interaction to facilitate poly-time computation. In what follows, we do not make any assumptions on agent interactions, but rather restrict our attention to a family of characteristic functions. This approach can lead to strong results, basically allowing us to efficiently decide any and all instances of OCF games in this class.  

We begin by recalling the notion of stability under the optimistic arbitration function, which will be notion of stability we study in this section. Given the optimistic arbittration function, denoted $\cal A_o$, an $\cal A_o$-profitable deviation of a set $S \subseteq N$ from an outcome $(\CS,\vec x)$ can be described by
\begin{itemize}
\item[(a)]
the list of coalitions $\CS|_S\subseteq \CS'\subseteq \CS$ that $S$ fully withdraws resources from. These are coalitions that $S$ does not wish to retain payoffs from, thus it fully withdraws its resources from them, and utilizes those resources to maximize its own profits.
\item[(b)]
the partial deviation of $S$ $\CS''$ from $\CS\setminus \CS'$, i.e. the amount of resources each $i \in S$ withdraws from each coalition in $\CS\setminus \CS'$. The coalitions in $\CS'$ are those that $S$ does wish to retain interactions with, and is thus willing to maintain the payoffs to $N\setminus S$ in those coalitions, effectively assuming the marginal cost of its deviation from those coalitions.
\end{itemize}
$S$ is then allowed to use the resources which it has withdrawn according to $\CS'$ and $\CS''$ in order to maximize its own profits, while absorbing the damage it has caused $\CS\setminus \CS'$ by withdrawing $\CS''$. 
We define a large class of OCF games that is motivated by combinatorial optimization and resource allocation scenarios,
and prove that these games always have a non-empty optimistic core. 
Moreover, we show that for games in this class an optimal coalition structure can be found
using linear programming, and the dual LP solution can be used to find an imputation in the
optimistic core. Our results in this section build on prior work on classic cooperative game
theory, where dual solutions have been used
to derive explicit payoff divisions that guarantee core stability~\citep{deng1997alg,jain2007cost,markakis2005core}; indeed, one can view our results as stating that not only are the games described in these works stable against deviations in the classic cooperative sense (i.e. have a non-empty conservative core), they are also resistant to deviations when much more lenient agent behavior is assumed. First, let us define the class of games we are interested in.

\begin{definition}
A {\em Linear Bottleneck Game} $\cal G =(N, \omega, \cal T)$
is given by a set of players $N = \{1,\dots,n\}$, a list
$\vec W = (W_1, \dots, W_n)$ of players' {\em weights},
and a list of {\em tasks}
$\cal T=(T_1, \dots, T_m)$, where each task $T_j$ is associated with a set of players $A_j\subseteq N$
who are needed to complete it, as well as a {\em value} $\pi_j\in\R_+$.
We assume that $A_j\neq A_{j'}$ for $j\neq j'$, and for each $i\in N$
there is a task $T_k\in\cal T$ with $A_k=\{i\}$.
The characteristic function of this game is defined as follows:
given a partial coalition $\vec c\in \cal W$, we set
$$
v(\vec c)=
\begin{cases}
\pi_j\cdot\min\limits_{i\in A_j}c^i &\mbox{\emph{if} $\supp(\vec c)=A_j$ \emph{for some} $j\in[m]$}\\
0	&\mbox{\emph{otherwise}}.
\end{cases}
$$
\end{definition}
These games are linear in the sense that the payoff earned by a partial coalition scales
linearly with the smallest contribution to this coalition; the smallest contribution is the ``bottleneck'' contribution, since the contribution of no member but the smallest member affects the value of the coalition.
The assumption that $A_j\neq A_{j'}$ for $j\neq j'$ ensures that the characteristic function
is well-defined; that is, each task is associated with a unique set of players that can complete it. Finally, since each player can work on his own (possibly earning a payoff
of $0$), all resources are used. This assumption will be useful when proving our results, since it allows us to invest unused agent resources in dummy tasks. 
\subsection{Some Examples}
LBGs can be used to describe a variety of settings; a more complete overview of their descriptive power can be seen in~\cite{deng1997alg}. However, for the sake of completeness, we provide three examples below. First, LBGs can describe
{\em multicommodity flow games}~\citep{vazirani2001approx,markakis2005core}. 
Briefly, in multicommodity flow games pairs of vertices
in a network want to send and receive flow, which has to be transmitted by
edges of the network. This setting can be modeled by a linear bottleneck game,
where both vertices and edges are players. The weight of an edge player
is the capacity of his edge, while the weight of a vertex player is the amount
of commodity he possesses. 

More formally, there are two types of agents in the multicommodity flow game: {\em suppliers}, denoted $N_s$, and {\em distributors}, denoted $N_d$. Given a directed graph $\Gamma$ with an edge set $E(\Gamma)$ and a node set $V(\Gamma)$, each supplier $i \in N_s$ controls a pair of nodes $(s_i,t_i)$, and has certain amount $W_i$ of a {\em commodity}, with a per-unit price of $\pi_i$. Each distributor $j \in N_d$ controls an edge $e_j$, such that $\{e_j\}_{j \in N_d} = E(\Gamma)$; each edge has a weight, or capacity, $w(e)$. A task in this game is to transfer a certain amount of a commodity owned by $i$ from $s_i$ to $t_i$. In this setting, each path from $s_i$ to $t_i$ is a task, with its associated set being the distributors on the path, and the supplier controlling $(s_i,t_i)$; the value of a coalition is the amount of the commodity supplied by $i$ that the path transfers, times the per-unit value of that commodity. We note that in this description, the number of possible tasks may be exponential in the number of agents, however, as discussed in~\citep{markakis2005core}, there do exist other, more succinct, ways of describing the problem that result in the same solution, and to which our techniques can be applied. We do maintain the current description, as it does highlight the fact that multicommodity flow games are indeed linear bottleneck games. 

Another example of a linear bottleneck game occurs in network routing settings. Consider again a directed graph $\Gamma$ with an edge set $E(\Gamma)$ and a node set $V(\Gamma)$. Here, agents are nodes, and each node $i$ has a certain weight $W_i$ (this can be thought of as processing power, or amount of memory). 
Now, the tasks in this setting are to transfer data from certain source nodes to certain target nodes, with each such $(s_j,t_j)$ pair associated with a per-unit payoff $\pi_j$. Unlike multicommodity flow games, the amount of data to be transferred is unlimited, with the only limitation on transfer power stemming from agents' own capacity constraints. 

Finally, consider a bipartite graph with the node sets $A,B$, such that $A\cap B = \emptyset$, and with no edges among the members of $A$ or the members of $B$. Agents are nodes, and for every $a \in A$ and $b \in B$, the edge $e =\{a,b\}$ has a certain value $\pi_e$. Each agent $i \in A\cup B$ has a weight $W_i$. The tasks here are the edges, with the member nodes of each edge being the set required to complete the task, and the payoff being $\pi_e$. This setting can be thought of as a slightly more generalized fractional weighted matching game, or, alternatively, as a trading market, where $A$ is a set of sellers and $B$ is a set of buyers. Having $a \in A$ and $b \in B$ form a coalition means that $b$ agrees to buy from $a$ for the set price of $\pi_e$ per unit. The total value of the coalition structure can be thought of as the total volume of exchanges made in the trading market.

\subsection{Computing Stable Outcomes in LBGs}
Before we proceed, let us make some simple observations on the structure of optimal coalition structures in LBGs.
\begin{lemma}\label{lem:lbgCS}
Given an LBG $\cal G = \tup{N,\vec W,\cal T}$, 
there is some optimal coalition structure $\CS$ such that 
\begin{enumerate}
\item[(a)]
for all $\vec c\in \CS$ we have $c^i = c^j$ for all $i,j \in \supp(\vec c)$.
\item[(b)]
$w_i(\CS) = W_i$ for all $i\in N$.
\item[(c)]
each $A_j$ forms at most one coalition in $\CS$.
\end{enumerate}
\end{lemma}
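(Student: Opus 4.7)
The plan is to start with an arbitrary optimal coalition structure $\CS^*$ and apply three successive normalization steps, each of which preserves the total value, yielding an optimal coalition structure satisfying (a), (b), and (c). The critical observation is that since $v(\vec c)$ depends only on the smallest contribution among $\supp(\vec c)$ (and is zero unless $\supp(\vec c)$ matches some $A_j$ exactly), any excess contributions can be ``shaved off'' without loss of value, and the guaranteed singleton task $A_k = \{i\}$ for each agent serves as a convenient sink for freed resources.

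First, for property (a), I would delete any coalition $\vec c \in \CS^*$ whose support does not equal some $A_j$, since such coalitions have $v(\vec c) = 0$. For each remaining coalition with $\supp(\vec c) = A_j$, set $m = \min_{i \in A_j} c^i$. If $m = 0$, delete $\vec c$; otherwise replace $\vec c$ by the leveled coalition $\vec c'$ with ${c'}^i = m$ for $i \in A_j$ and ${c'}^i = 0$ elsewhere, so that $v(\vec c') = \pi_j m = v(\vec c)$. The freed resources $c^i - m$ for each $i \in A_j$ are placed in the singleton task $\{i\}$, which exists by assumption and has non-negative value, so the total value does not decrease.

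Second, for property (c), once (a) holds, any two coalitions $\vec c_1, \vec c_2$ sharing support $A_j$ have uniform contributions $a_1, a_2$ respectively; merging them into $\vec c_1 + \vec c_2$ yields a coalition with uniform contribution $a_1 + a_2$ on $A_j$ and value $\pi_j(a_1 + a_2) = v(\vec c_1) + v(\vec c_2)$, which preserves (a). Iterating, each $A_j$ supports at most one coalition. Third, for property (b), any agent $i$ with unused resources directs them into the singleton coalition for $\{i\}$; since $\CS^*$ was optimal, any singleton task with strictly positive value must have already been saturated in the original (else reallocation would strictly increase value), so this step leaves the total value unchanged.

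The only genuine subtlety is ensuring that the singleton coalitions created in steps one and three are consolidated by a final pass of step two so that (c) is not violated. Since singleton coalitions trivially have uniform contributions, this merging is immediate, and after the final pass all three properties hold simultaneously. I do not anticipate any substantial obstacle beyond keeping the bookkeeping consistent across the three phases.
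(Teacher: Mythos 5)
Your proof is correct and follows essentially the same route as the paper: level each coalition's contributions down to the bottleneck value, divert excess and unused weight into the guaranteed singleton tasks (whose non-negative value, together with optimality of the starting structure, keeps the total unchanged), and merge coalitions with identical support using the (super)additivity of the value on same-support coalitions. Your extra care about the order of the normalizations and the final consolidation of the created singletons is sound bookkeeping but does not change the argument.
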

\begin{proof}
First, note that given an optimal coalition structure $\CS$ for a linear bottleneck game,
we can assume without loss of generality that for every $\vec c$
in $\CS$ and every $i,k \in A_j$ we have $c^i\omega^i = c^k\omega^k$:
investing more weight than one's team members does not increase the payoff from the task,
so a player might as well use this weight to work alone. Second, since we assume that there is a task that an agent can complete alone, the value of a coalition structure can only increase when agents invest any unused weight in working alone. 
Finally, it can be assumed that
$\CS$ contains at most one coalition $\vec c$ with $\supp(\vec c)=A_j$ for each $j=1, \dots, m$:
if $\supp(\vec c)=\supp(\vec d)=A_j$, then $v(\vec c+\vec d)\ge v(\vec c)+v(\vec d)$, so two
coalitions with the same support can be merged.
This implies that we can assume that in an optimal coalition structure
each $A_j$ forms at most one coalition $\vec c_j$. 
\end{proof}
Lemma~\ref{lem:lbgCS} implies that an optimal coalition structure can be described by a list $C_1,\dots C_m$,
indicating how much weight is allocated to each task.

We can now write a linear program that finds an optimal coalition structure
for an LBG $\cal G=\tup{N, \vec W, \cal T}$:
\begin{eqnarray}\label{eq:LBG}
\textrm{max:}	&\sum_{j = 1}^m c_j\pi_j \\
\textrm{s.t.} & \sum_{j:i \in A_j} c_j \le W_i & \forall i \in N\nonumber\\
							& c_j \ge 0 & \forall j \in [m] \nonumber
\end{eqnarray}

The dual of LP~\eqref{eq:LBG} is
\begin{eqnarray}\label{eq:LBGdual}
\textrm{min:} 	& \sum_{i = 1}^n W_i\gamma_i \\
\textrm{s.t.} 	& \sum_{i \in A_j} \gamma_i \ge \pi_j 	& \forall j \in [m] \nonumber\\
			& \gamma_i \ge 0					&\forall i \in N\nonumber
\end{eqnarray}
Let $\widehat{c_1},\dots, \widehat{c_m}$ and
$\widehat{\gamma_1},\dots,\widehat{\gamma_n}$ be optimal solutions to~\eqref{eq:LBG} and~\eqref{eq:LBGdual}
respectively.
Let $\CS$ be the coalition structure that corresponds to $\widehat{c_1},\dots, \widehat{c_m}$.
We construct a payoff vector $\vec x$ for $\CS$ as follows:
for every $j = 1, \dots, m$
we set $x_j^i = \widehat{\gamma_i}\widehat{c_j}$, if $i\in A_j$, and $x^i_j=0$ otherwise.
In words, each player $i$ has some ``bargaining power'' $\widehat{\gamma_i}$,
and is paid for each task he works on
in proportion to his bargaining power.
Note that both $\CS$ and $\vec x$ can be computed efficiently
from the description of the game.
We will now show that $\vec x$ is an imputation for $\CS$,
and, moreover, $(\CS, \vec x)$ is in the optimistic core.

\begin{theorem}\label{thm:LBGopt}
Let $\cal G=\tup{N, \vec W, \cal T}$ be a linear bottleneck game, and let $\CS$ and $\vec x$
be the coalition structure and the payoff vector constructed above.
Then $\vec x\in I(\CS)$ and
$(\CS, \vec x)$ is in the optimistic core of $\cal G$.
\end{theorem}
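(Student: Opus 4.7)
The plan is to show in sequence that $\vec x$ is a valid imputation for $\CS$ and then that $(\CS,\vec x)$ satisfies the core characterization from Theorem~\ref{thm:core} under the optimistic arbitration function.

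First I would verify $\vec x\in I(\CS)$. Coalitional efficiency follows from complementary slackness: whenever $\widehat{c_j}>0$ the dual inequality $\sum_{i\in A_j}\widehat{\gamma_i}\ge \pi_j$ is tight, so $\sum_{i}x_j^i = \widehat{c_j}\sum_{i\in A_j}\widehat{\gamma_i} = \pi_j\widehat{c_j}=v(\vec c_j)$, where I use Lemma~\ref{lem:lbgCS}(a) to identify $v(\vec c_j)$ with $\pi_j\widehat{c_j}$; the case $\widehat{c_j}=0$ is immediate. No-side-payments holds by construction. For individual rationality, the dual constraint on the singleton task $\{i\}$ gives $\widehat{\gamma_i}\ge \pi_{\{i\}}$, which, combined with the identity $p_i(\CS,\vec x)=\widehat{\gamma_i}W_i$ derived below, yields $p_i\ge \pi_{\{i\}}W_i\ge v^*(\vec W^i)$.

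A second application of complementary slackness ($\widehat{\gamma_i}>0\Rightarrow\sum_{j:i\in A_j}\widehat{c_j}=W_i$, the other case being trivial) delivers the key identity
\begin{equation*}
p_S(\CS,\vec x)=\sum_{i\in S}\widehat{\gamma_i}\sum_{j:i\in A_j}\widehat{c_j}=\sum_{i\in S}\widehat{\gamma_i}W_i.
\end{equation*}
By Theorem~\ref{thm:core} it then suffices to prove $\cal A_o^*(\CS,\vec x,S)\le \sum_{i\in S}\widehat{\gamma_i}W_i$ for every $S\subseteq N$. Fix any $\cal A_o$-deviation of $S$: each original coalition $\vec c_j$ is either maintained (possibly with a partial withdrawal $\vec d_j$, and set $d_j^*=\max_{k\in A_j\cap S}d_j^k$) or fully abandoned, and $S$ may form new coalitions whose supports lie inside $S$. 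For each task $T_j$, let $\tilde{c}_j$ denote the sum, over all post-deviation coalitions with support $A_j$, of their minimum contributions. Decomposing the payoff to $S$ task by task and using $\pi_j=\sum_{i\in A_j}\widehat{\gamma_i}$ (complementary slackness) for original coalitions and $\pi_j\le \sum_{i\in A_j}\widehat{\gamma_i}$ (dual feasibility) for new ones, each revenue piece is bounded by $\sum_{i\in A_j\cap S}\widehat{\gamma_i}$ times the corresponding min-contribution; in the partial-deviation case, the reimbursement to non-deviators cancels the non-$S$ share of $\pi_j(\widehat{c_j}-d_j^*)$ up to a non-positive remainder $-\sum_{i\in A_j\setminus S}\widehat{\gamma_i}d_j^*$. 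Summing and re-grouping by player yields
\begin{equation*}
\cal A_o^*(\CS,\vec x,S)\le \sum_{i\in S}\widehat{\gamma_i}\sum_{j:i\in A_j}\tilde{c}_j,
\end{equation*}
and since $\sum_{j:i\in A_j}\tilde{c}_j$ is at most $i$'s total post-deviation contribution, which is bounded by $W_i$, the desired inequality follows.

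The hard part is the partial-deviation bookkeeping: $S$ collects only the \emph{reduced} task value $\pi_j(\widehat{c_j}-d_j^*)$ while owing non-deviators their \emph{original} payoff $\sum_{i\in A_j\setminus S}\widehat{\gamma_i}\widehat{c_j}$. Rewriting $\pi_j$ as $\sum_{i\in A_j}\widehat{\gamma_i}$ via complementary slackness and splitting the sum over $A_j\cap S$ and $A_j\setminus S$ is what makes the reimbursement align with a clean per-player bound involving only the bargaining powers of deviators; the remaining steps are routine LP-duality manipulations.
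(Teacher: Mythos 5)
Your proof is correct, and it rests on the same underlying mechanism as the paper's — pay each player in proportion to his dual price $\widehat{\gamma_i}$ and use LP duality plus complementary slackness to rule out every optimistic deviation — but the bookkeeping is organized differently. The paper first normalizes the deviation (uniform withdrawals $z_j$ per task, one coalition per task), writes the deviators' re-optimization problem as the auxiliary LP~\eqref{eq:LBG-S}, bounds its value $\alpha$ by weak duality using the restriction of $\widehat{\gamma}$ to $S$ (feasible for~\eqref{eq:LBGdual-S}), and then shows that the term $\sum_{i\in S}\widehat{\gamma_i}Z_i$ absorbs the marginal damage $\sum_j z_j\pi_j$, leaving $\sum_{i\in S}\widehat{\gamma_i}\nu_i\le p_S(\CS,\vec x)$. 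You instead charge each post-deviation coalition directly to the dual prices of its $S$-members — complementary slackness ($\pi_j=\sum_{i\in A_j}\widehat{\gamma_i}$) for maintained original coalitions, where the reimbursement owed to $A_j\setminus S$ cancels up to the non-positive remainder $-\sum_{i\in A_j\setminus S}\widehat{\gamma_i}d_j^*$, and dual feasibility for newly formed coalitions — then regroup by player and close with the capacity bound $\sum_{j:i\in A_j}\tilde{c}_j\le W_i$ and the identity $p_S(\CS,\vec x)=\sum_{i\in S}\widehat{\gamma_i}W_i$ (primal complementary slackness). This effectively inlines the weak-duality step, and it has two small advantages over the published argument: it handles non-uniform withdrawals directly through $d_j^*$, where the paper asserts uniformity without loss of generality in a single line, and it verifies the individual-rationality part of $\vec x\in I(\CS)$ explicitly via the singleton-task dual constraint $\widehat{\gamma_i}\ge\pi_{\{i\}}$, a check the paper's proof skips. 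What the paper's route buys in exchange is the explicit observation that the deviators' residual problem is itself an LBG-type linear program, which is the structural fact behind the polynomial-time corollary that follows the theorem.
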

\begin{proof}
First, we argue that $\vec x\in I(\CS)$.
To see that $\vec x$ satisfies coalitional efficiency,
note that the sum of payoffs from task $T_j$ is
\begin{eqnarray*}
\sum_{i\in A_j} x_j^i & = & \sum_{i \in A_j} \widehat{\gamma_i}\widehat{c_j} \\
				& = & \widehat{c_j}\sum_{i\in A_j} \widehat{\gamma_i}
\end{eqnarray*}
As $\widehat{\gamma_1}, \dots, \widehat{\gamma_n}$ is an optimal solution to~\eqref{eq:LBGdual},
we have either $\sum_{i \in A_j} \widehat{\gamma_i} = \pi_j$ or $\widehat{c_j}=0$ (by complementary slackness). Thus, for any task $T_j$
that is actually executed (i.e., $\widehat{c_j}>0$), its total payoff $\pi_j\widehat{c_j}$ is
shared only by players in $A_j$.

We now show that the outcome $(\CS,\vec x)$ is in the optimistic core. 
We can assume without loss of generality that $\CS$ allocates non-zero weight to the first $k$ tasks
$T_1, \dots, T_k$ and no weight to the rest ($k\le m$). Consider a deviation from $(\CS, \vec x)$ by a set $S$. 
This deviation can be described by a list of tasks that $S$ abandons completely, and the amount
of weight that players in $S$ withdraw from all other tasks. Assume without loss of generality
that the tasks that $S$ abandons completely are $T_{\ell+1}, \dots, T_k$
(this list includes all tasks $T_j$ with $A_j\subseteq S$), and for each
$j=1, \dots, \ell$ each member of $A_j\cap S$ withdraws $z_j$ units of weight from $T_j$. Observe that ``non-uniform'' deviations are no better than ``uniform'' ones, i.e. if one agent withdraws more weight from a coalition than the rest in an optimal deviation, then the rest may as well withdraw the same weight.

When players in $S$ deviate, they lose their payoff from $T_{\ell+1}, \dots, T_k$,
and their payoff from $T_1, \dots, T_\ell$ is reduced by $\sum_{j=1}^\ell z_j\pi_j$.

For each $i\in S$, set
$\nu_i=\sum_{\ell<j\le k, i\in A_j} \widehat{c_j}$,
$Z_i = \sum_{j\le \ell, i\in A_j} z_j$:
$\nu_i$ is the total amount of weight that $i$ withdraws from tasks $T_{\ell+1}, \dots, T_k$
while $Z_i$ is the total amount that $i$ withdraws from $T_1, \dots, T_\ell$.
The profit that $S$ obtains from optimally using the withdrawn resources
is given by the following linear program:
\begin{eqnarray}\label{eq:LBG-S}
\textrm{max:} & \sum_{A_j\subseteq S} c_j\pi_j\\
\textrm{s.t.} & \sum_{j:i \in A_j, A_j\subseteq S} c_j \le \nu_i+ Z_i & \forall i \in S\nonumber
\end{eqnarray}
The dual of LP \eqref{eq:LBG-S} is
\begin{eqnarray}\label{eq:LBGdual-S}
\textrm{min:} & \sum_{i\in S} \gamma_i(\nu_i +Z_i)\\
\textrm{s.t.} & \sum_{i \in A_j} \gamma_i \ge \pi_j & \forall A_j\subseteq S \nonumber
\end{eqnarray}

Let $\alpha$ be the value of~\eqref{eq:LBG-S} (and hence also of~\eqref{eq:LBGdual-S}). Note that the total profit that $S$ gets by deviating equals $\alpha - \sum_{j = 1}^\ell z_j\pi_j$, where $\sum_{j = 1}^\ell z_j\pi_j$ is the total marginal loss incurred by $S$ partially deviating from $T_1,\dots,T_\ell$.
Any optimal solution to~\eqref{eq:LBGdual} is a feasible
solution to~\eqref{eq:LBGdual-S} (when looking at the restriction of the solution to those members of $i$); the constraints in~\eqref{eq:LBGdual} are more restricted than those in~\eqref{eq:LBGdual-S}, since $\nu_i+Z_i \le W_i$ for all $i \in S$. Hence, given the optimal solution of the Dual~\eqref{eq:LBGdual} restricted to $S$, $(\widehat{\gamma}_i)_{i \in S}$, we have that 
$$\alpha\le \sum_{i\in S}\widehat{\gamma_i}(\nu_i+Z_i).$$

Now, $\sum_{i\in S}\widehat{\gamma^i}\nu^i$ is exactly the payoff that $S$
was getting from $T_{\ell+1}, \dots, T_k$ under $(\CS,\vec x)$.
Further,
\begin{eqnarray*}
\sum_{i\in S}\widehat{\gamma_i}Z_i 	& = &\sum_{j=1}^\ell\sum_{i\in S\cap A_j}\widehat{\gamma_i}z_j\\
							& = & \sum_{j=1}^\ell z_j\left(\sum_{i\in S\cap A_j}\widehat{\gamma_i}\right)\le \sum_{j=1}^\ell z_j\left(\sum_{i\in  A_j}\widehat{\gamma_i}\right) \le \sum_{j=1}^\ell z_j\pi_j
\end{eqnarray*}
where the last inequality
holds as per the constraints in~\eqref{eq:LBGdual}; as previously mentioned, the latter expression
is the marginal loss that $S$ pays for withdrawing resources from $T_1, \dots, T_\ell$.
Thus, the total payoff that $S$ gets from deviating is at most $\sum_{i \in S} \widehat{\gamma_i}\nu_i$; but:
\begin{eqnarray*}
\sum_{i \in S} \widehat{\gamma_i}\nu_i 	& = & \sum_{i \in S}\widehat{\gamma_i}\sum_{\ell<j\le k, i\in A_j} \widehat{c_j}\\
								& \le &\sum_{i \in S}\widehat{\gamma_i}\sum_{j:i\in A_j} \widehat{c_j}\\
								& = & \sum_{i \in S}p_i(\CS,\vec x) = p_S(\CS,\vec x)
\end{eqnarray*}
To conclude, the total payoff $S$ receives from deviating under the optimistic arbitration function does not exceed its payoff in $(\CS,\vec x)$.
As this holds for any deviation and any $S$, $(\CS, \vec x)$ is in the optimistic core.
\end{proof}
Finding an optimal solution for a linear program and its dual can be done in polynomial time; therefore, we obtain the following immediate corollary.
\begin{corollary}
Given an LBG $\cal G = \tup{N,\vec W,\cal T}$, $\optval$, $\arbval$, and $\checkcore$ can be decided in polynomial time under the optimistic arbitration function.
\end{corollary}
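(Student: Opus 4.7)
The approach is to reduce each of the three problems to polynomial-size linear programs built on the LP structure used in the proof of Theorem~\ref{thm:LBGopt}. For $\optval(\cal G,\vec c,V)$, Lemma~\ref{lem:lbgCS} lets us restrict attention to uniform coalition structures, and solving LP~\eqref{eq:LBG} after replacing each $W_i$ with $c^i$ computes $v^*(\vec c)$, which we compare with $V$.

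For $\arbval(\cal G,\cal A_o,S,(\CS,\vec x),V)$, I will formulate the optimal deviation of $S$ as an LP whose variables are a uniform withdrawal $z_j\in[0,c_j]$ from each partial coalition $\vec c_j$ with $A_j\not\subseteq S$, and a contribution $y_k\ge 0$ to each new task $T_k$ with $A_k\subseteq S$. Expanding the optimistic payment rule and using the identity $W_i=\sum_{j:i\in A_j}c_j$ that follows from efficiency and the assumption that each $\{i\}$ is a task, the post-deviation payoff of $S$ simplifies to the linear objective $p_S(\CS,\vec x)+\sum_{k:A_k\subseteq S}\pi_k y_k-\sum_{j:A_j\subseteq S}\pi_j c_j-\sum_{j:A_j\not\subseteq S}\pi_j z_j$, subject to the weight constraint $\sum_{k:i\in A_k,A_k\subseteq S}y_k\le\sum_{j:i\in A_j,A_j\subseteq S}c_j+\sum_{j:i\in A_j,A_j\not\subseteq S}z_j$ for each $i\in S$, together with $0\le z_j\le c_j$ and $y_k\ge 0$. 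The optimum of this LP is exactly $\cal A_o^*(\CS,\vec x,S)$, and we compare to $V$.

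The main step is $\checkcore(\cal G,\cal A_o,(\CS,\vec x))$. After the routine check that $\vec x\in I(\CS)$, the key observation is that removing the constant $p_S(\CS,\vec x)$ from the objective above leaves an LP that depends only on $\CS$ and $S$; consequently the excess $e(\CS,\vec x,S)=\cal A_o^*(\CS,\vec x,S)-p_S(\CS,\vec x)$ is independent of $\vec x$. I will then prove the characterization that $e(\CS,\vec x,S)\le 0$ for every $S\subseteq N$ if and only if $v(\CS)=v^*(\vec W)$. The ``only if'' direction is immediate by taking $S=N$: no partial coalitions remain, and the excess LP collapses to $v^*(\vec W)-v(\CS)$. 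For the ``if'' direction, I will show that any feasible $(y,z)$ for the deviation LP of $S$ assembles into a coalition structure of $\cal G$ on the whole agent set---the new $S$-only coalitions $\{y_k\}$, the (possibly non-uniform) partial remainders of value $\pi_j(c_j-z_j)$ on each task with $A_j\not\subseteq S$, and zero-value singleton tasks absorbing any leftover weight (guaranteed to exist by the LBG definition)---whose total value $\sum_k\pi_k y_k+\sum_{j:A_j\not\subseteq S}\pi_j(c_j-z_j)$ is bounded above by $v^*(\vec W)=v(\CS)=\sum_j\pi_j c_j$; rearranging yields $e\le 0$. Hence $\checkcore$ reduces to solving LP~\eqref{eq:LBG} once and comparing its optimum to $v(\CS)$. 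The main obstacle I anticipate is making the ``if'' step rigorous: the non-uniform remainders of partial coalitions must still be interpreted as legitimate LBG coalitions whose value is correctly captured by $\pi_j(c_j-z_j)$, and this is precisely where the linear-bottleneck form $v(\vec c)=\pi_j\min_i c^i$ pays off, since it guarantees that non-uniform contributions never outperform their uniform minimum.
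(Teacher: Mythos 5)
Your $\optval$ step is fine and coincides with the paper's (solve LP~\eqref{eq:LBG} with capacities given by $\vec c$). The genuine gap is in how you model the optimistic arbitration function, and it breaks both your $\arbval$ LP and, more seriously, your $\checkcore$ characterization. In your LP, withdrawing $z_j$ from a task with $A_j\not\subseteq S$ always costs the deviators $\pi_j z_j$, even at $z_j=c_j$. But under the notion used in Section~\ref{sec:LBG} (the payment is $\max\{v(\vec c-\vec d(\vec c))-\sum_{i\notin S}x(\vec c)^i,\,0\}$, and in the deviation description at the start of that section and in the proof of Theorem~\ref{thm:LBGopt} the completely abandoned tasks are \emph{not} restricted to those with $A_j\subseteq S$), a deviating set may walk away from a mixed task entirely and forfeit only its own payoff $\sum_{i\in S}x_j^i$, which can be strictly smaller than $\pi_j c_j$. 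Consequently your LP underestimates $\cal A_o^*(\CS,\vec x,S)$, and the excess is \emph{not} independent of $\vec x$: the abandonment option makes it depend on the deviators' forgone shares.

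A concrete counterexample to your $\checkcore$ test: three unit-weight players, tasks $A=\{1,2\}$ with per-unit value $1$, $A=\{1,3\}$ with per-unit value $0.9$, and worthless singleton tasks. The structure $\CS$ putting all of players $1$ and $2$'s weight on the $\{1,2\}$ task is optimal ($v(\CS)=v^*(\vec W)=1$), and paying the whole unit to player $2$ is an imputation (individual rationality is vacuous), so your test declares the outcome stable; yet $S=\{1,3\}$ can abandon the $\{1,2\}$ task at a cost of only player $1$'s payoff ($=0$) and earn $0.9$ on the $\{1,3\}$ task, a strictly profitable deviation. The same instance shows your $\arbval$ LP returning $0$ while $\cal A_o^*(\CS,\vec x,\{1,3\})=0.9$. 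Note also that if your characterization (``any imputation of an optimal $\CS$ is in the optimistic core'') were true, the dual-based choice of $\vec x$ in Theorem~\ref{thm:LBGopt} would be pointless; in the example above that construction pays player $1$ exactly $\widehat{\gamma_1}=0.9$, which is precisely what blocks the deviation. The paper's corollary is instead obtained directly from that machinery: solve LP~\eqref{eq:LBG} and its dual~\eqref{eq:LBGdual}, output the resulting $(\CS,\vec x)$, and certify stability via the bound in the proof of Theorem~\ref{thm:LBGopt}, in which an abandoned task costs the deviators only their own payoff from it. Your reductions would be exact only under the strictly linear reading of $\cal A_o$ (no truncation at $0$, so deviators must fully compensate non-deviators even in coalitions they abandon), which is not the notion Theorem~\ref{thm:LBGopt} and this corollary use; a smaller issue is that your identity $W_i=\sum_{j:i\in A_j}c_j$ presumes the given $\CS$ is efficient, which a general $\arbval$ or $\checkcore$ instance need not satisfy.
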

Moreover, since we know that the optimistic core is not empty for LBGs, it is in particular not empty for the conservative, sensitive and refined arbitration functions.

\section{Conclusions and Future Work}
In this paper, we analyzed computational aspects of finding optimal coalition structures and stable outcomes in OCF games. The first part of our work assumed a certain structure on agent interaction: agents were only allowed to form small-sized coalitions, and computational efficiency was highly dependent on a tree-like interaction graph. In the latter part of this work, we assumed that agent interaction was not limited, but restricted ourselves to a class of characteristic functions. 

Using structural limitations as a method for ensuring tractability is interesting, but it forces us to make strong assumptions on the way agents interact; one could argue that the assumptions we require are rather unrealistic. Apart from their intrinsic interest, we believe that our results show how hard it is to efficiently compute core allocations in OCF games when one makes no assumptions on the characteristic function. 

We believe that results in the spirit of Section~\ref{sec:LBG} would prove to be more instructive; that is, finding classes of OCF games that can be stabilized with respect to certain arbitration functions would prove to be useful. 

Arbitration functions naturally encode agent behavior towards one another; more lenient reactions to deviation signify agents that are more tolerant towards each other. This tolerance may arise due to agents being completely myopic in their behavior ---as is captured by the idea of a local arbitration function--- and is an important assumption in deciding the computational complexity of stability related problems in OCF games. 
We believe that non-myopic reaction to deviation (e.g. the sensitive arbitration function) would severely hinder poly-time computability. It would be interesting to identify a meaningful class of OCF games for which computing a core allocation is possible in polynomial time if the arbitration function is local, but is not when the arbitration function is more complex. Even results related to, say, the refined and the sensitive core would be interesting. We mention that linear bottleneck games are not appropriate for this type of analysis; since the optimistic core of these games is not empty (and an optimistic core outcome can be found in polynomial time), an outcome in the sensitive core can trivially be found (simply pick the outcome computed for the optimistic core), as the optimistic core is contained in the sensitive core.

\subsection{Related Work}
Our work expands and builds upon two previous papers on OCF games. The first is the seminal work by~\cite{ocfgeb}, and the second is a recently published paper by~\cite{zick2014jair}. \cite{ocfgeb} define the OCF model and discuss some initial computational results; for example, they show that it is possible to find an outcome in the conservative core of threshold task games if agent weights are not too large; they also show that finding an outcome in the refined core is computationally harder than finding an outcome in the conservative core. This is the first indication that different arbitration functions not only lead to different outcomes, but can also raise computational barriers. While \cite{zick2014jair} do not study computational aspects of OCF games, their work does present us with some possibly useful tools in the computational analysis of OCF games. First, \cite{zick2014jair} show that certain classes of OCF games are guaranteed to have a non-empty core; however, their proofs rely on balancedness conditions, and are not computational in nature. Thus, even though some games are guaranteed to be stable with respect to some arbitration functions, computing stable outcomes may be hard.

Many works focus on computing optimal coalition structures in cooperative games. The {\em optimal coalition structure generation problem} has received plenty of attention in classic cooperative game theory literature (see \cite{sandholm99coalition,larson2000anytime,michalak2008optimal,rahwan2009anytime,rahwan2012anytime}, as well as the overview chapters in~\citep{compcoopbook,masbook}). Some authors have studied games with overlapping coalitions as well. We mention the seminal work by \cite{shehorykraus}, as well as the work by~\cite{ocfalgo07} and \cite{ocfsearch10}. While these works are highly related to ours, their methodology and objectives are different. We are not only interested in forming an optimal coalition structure, but also in stable revenue division.

Computational aspects of classic cooperative games have been an object of extensive study. Preliminary computational results can be attributed to the founders of the field; \cite{mann60sv} and \cite{mann62sv} study methods to compute the Shapley value~\cite{shapleyvalue} (an important solution concept in cooperative game theory) using exact methods and via approximation. Their results, while not phrased in the language of modern computational complexity, are of a computational nature. The seminal paper by \cite{deng1994complexity} was the precursor of several works on the subject. A non-comprehensive list includes \cite{deng1997alg,ieong2005mcnets,matsui2000survey,elkind2007complexity,greco2011complexity}; see the survey in~\cite{compcoopbook}. The methods used by \cite{brafman2010planning} are quite similar to ours, and echo the ideas of \cite{Demange2004}, in exploiting the tree structure of games for computational purposes. Finally, we mention some recent works on stability in OCF environments. \cite{zhang2013overlapping} employ the OCF model in analyzing wireless networks, and show some stability results in this setting; \cite{acker2012fair} study a pairwise collaboration model that is similar to the OCF model studied here. Their paper studies pairwise equilibria in this model, rather than stability. 
\section*{References}

\bibliographystyle{elsarticle-num-names}      
\bibliography{bib}   

\end{document}